\begin{document}
\newtheorem{thm}{Theorem}[section]
\newtheorem{lemma}[thm]{Lemma}
\newtheorem{prop}[thm]{Proposition}
\newtheorem{rem}[thm]{Remark}
\newtheorem{cor}[thm]{Corollary}

%\begin{document}

\title{Conservation-Dissipation Formalism for Soft Matter Physics: II. Application to Non-isothermal Nematic Liquid Crystals} % Title
\author{Liangrong Peng}%
\affiliation{Zhou Pei-Yuan Center for Applied Mathematics, Tsinghua University, Beijing, China, 100084}
\author{Yucheng Hu}%
\affiliation{Zhou Pei-Yuan Center for Applied Mathematics, Tsinghua University, Beijing, China, 100084}
\author{Liu Hong}
\homepage{Author to whom correspondence should be addressed. Electronic mail: zcamhl@tsinghua.edu.cn}
\affiliation{Zhou Pei-Yuan Center for Applied Mathematics, Tsinghua University, Beijing, China, 100084}
\date{\today} % Date

\begin{abstract}
To most existing non-equilibrium theories, the modeling of non-isothermal processes was a hard task.
Intrinsic difficulties involved the non-equilibrium temperature, the coexistence of conserved energy and dissipative entropy, etc.
In this paper, by taking the non-isothermal flow of nematic liquid crystals as a typical example, we illustrated that thermodynamically consistent models in either vectorial or tensorial forms could be constructed within the framework of Conservation-Dissipation Formalism (CDF).
And the classical isothermal Ericksen-Leslie model and Qian-Sheng model were shown to be special cases of our new vectorial and tensorial models in the isothermal, incompressible and stationary limit.
Most importantly, from above examples, it was learnt that mathematical modeling based on CDF could easily solve the issues relating with non-isothermal situations in a systematic way.
The first and second laws of thermodynamics were satisfied simultaneously.
The non-equilibrium temperature was defined self-consistently through the partial derivative of entropy function.
Relaxation-type constitutive relations were constructed, which gave rise to the classical linear constitutive relations, like Newton's law and Fourier's law, in stationary limits.
Therefore, CDF was expected to have a broad scope of applications in soft matter physics, especially under the complicated situations, such as non-isothermal, compressible and nanoscale systems.
\end{abstract}

\keywords{Conservation-Dissipation Formalism, Non-isothermal,
%Polymers of Different Lengths,  Soft Matter Physics,
Nematic Liquid Crystal, Ericksen-Leslie Model, Qian-Sheng Model}

%----------------------------------------------------------------------------------------
\maketitle

\section{Introduction}
Liquid crystals were intermediate states of materials between solid crystals and liquid.
According to molecular symmetry, liquid crystals were loosely classified into nematics, cholesterics and smectics.
Among them, the nematic liquid crystals, which were often composed of long, thin, rod-like molecules with long axes of neighbouring molecules aligned roughly parallel to each other, were the most studied phase \cite{deGennes1995The}.

Based on scales of description, theories for nematic liquid crystals could be categorized into three different but closely-related types --
the Doi-Onsager theory, Landau-de Gennes theory and Ericksen-Leslie theory.
The first one belonged to a molecular model, which focused on the distribution function of molecular orientations.
The second one was a tensorial model, in which a symmetric traceless tensor ($Q$-tensor) was utilized to describe the orientational properties of molecules.
The last one was a vectorial model, which used a unit vector to characterize the average direction of molecules \cite{Han2015From}.
Notice that the second moment of the orientational distribution function corresponded to the $Q$-tensor, while the uniaxial form of $Q$-tensor reduced to the director vector in the Ericksen-Leslie theory.
As a consequence, models in different scales were connected to each other under certain limit processes.
It was rigorously shown in mathematics that \cite{Han2015From, Wang2015Rigorous}, the Doi-Onsager equation could be simplified to the $Q$-tensor theory by Bingham closure and Taylor expansion, and the Ericksen-Leslie equation can be derived from the $Q$-tensor theory by expansion near the local equilibrium.

%\section{Applications}
%%\subsubsection{Introduction}
Pioneering works on the macroscopic continuum modeling of nematic liquid crystals could be dated back to Ericksen \cite{Ericksen1962Hydrostatic} and Leslie \cite{Leslie1968Some, Leslie1979Theory} in the 1960s, who established rather general conservation laws and constitutive equations from a hydrodynamical point of view.
The Ericksen-Leslie (E-L) theory was shown to be successful in modeling and explaining many intrinsic physical phenomena of the nematics, including the
%flow properties and responses to electric and magnetic fields
flow alignment, electric and magnetic induced flows in display devices
\cite{Leslie1979Theory, deGennes1995The, Chandrasekhar1992Liquid} and \textit{etc.}

Due to the complexity of E-L theory in mathematical analysis, Lin and Liu \cite{Lin1995Nonparabolic, Lin2000Existence} tried to simplify it by introducing a penalty approximation to relax the nonlinear constraint of the director and reducing the Oseen-Frank energy into a single term.
The global existence of weak solutions of the simplified isothermal model was proved rigorously.
Later, a similar model was derived by Sun and Liu \cite{Sun2008On} within a general energetic variational framework, and by us with the Conservation-Dissipation Formalism in the first paper of this series \cite{Peng2018Conservation}.
For recent developments on mathematical models for the hydrodynamic flows of nematic liquid crystals, see refs \cite{Lin2001Static, Fanghua2014Recent} for details.

The classical E-L theory presumed the liquid crystal was at constant temperature and the director was of constant length. However, the non-isothermal and compressible nematic liquid crystals were far more interesting and important in industrial applications. A key reason was that for nematics of thermotropic type the most usual operation to induce a phase transition was to change the temperature. Furthermore, mixtures of different nematics were widely used in the display industry in order to obtain a low melting point \cite{deGennes1995The}.
In contrast to isothermal E-L models, results for non-isothermal cases were not so fruitful.
The generalized E-L model in non-isothermal situations was presented,
by Hieber and Pr\"{u}ss \cite{Hieber2016Thermodynamical, Hieber2017Dynamics} based on the principle of thermodynamical consistency,
in the incompressible case by Feireisl, Rocca and Schimperna \cite{Feireisl2011On},
and by De Anna and Liu \cite{Anna2017Non} with a generalized Oseen-Frank energy.

On the other hand, one could discuss the hydrodynamic behaviour of liquid crystal flows based on the Landau-de Gennes (LdG) theory \cite{deGennes1995The}.
The LdG theory shared the same variables of density and velocity as the E-L theory. A fundamental difference between them lied on the choice of order parameter.
To account for states of nematics at time $t$ and position $x$, the E-L theory adopted the director vector $d(x,t)$, which was sufficient to describe the nematic dynamics of low-molar mass, while the LdG theory considered a traceless symmetric tensor $Q(x,t)$ that was more accurate in the presence of high disclination density \cite{Gay2011The}.
As to the $Q$-tensor theory, there were plenty of different models, based on the orientational distribution function \cite{Han2015From}, derived from variational principles \cite{Beris1994Thermodynamics, Qian1998Generalized}, and recently for non-isothermal nematics \cite{Feireisl2014Evolution, Feireisl2015Nonisothermal}.
However, a general dynamical theory that took non-isothermal and compressible effects into account for liquid crystals described by $Q$-tensor was still lacking \cite{Sonnet2004Continuum}.

In this paper, we mainly concentrated on applications of the Conservation-Dissipation Formalism (CDF) to the mathematical modeling of non-isothermal flows of nematic liquid crystals. New generalized vectorial (E-L) theory and tensorial ($Q$-tensor) theory for non-isothermal and compressible situations were constructed. Concrete expressions of entropy flux and entropy production rate of the system, as well as constitutive relations for the stress tensor, director/$Q$-tensor body torque, director/$Q$-tensor surface torque and heat flux were derived. Especially, under the isothermal and incompressible condition, the classical E-L theory or Qian-Sheng model were shown to be special cases of our vectorial and tensorial models in the stationary limit.

\section{Vectorial models for non-isothermal flows of nematic liquid crystals}
\label{E-L theory}

\subsection{Conservation laws}
%\section{Non-isothermal nematic liquid crystal fluid: The generalized Ericksen-Leslie equations}

In this part, we considered non-isothermal flows of liquid crystals in the nematic phase.
For simplicity, external electric and magnetic fields were not taken into consideration.
The conservation laws of mass, momentum, angular momentum and total energy took the following form:
\begin{align}
&\frac{\partial}{\partial t}\rho + \nabla \cdot (\rho v)=0,\label{mass}\\
&\frac{\partial}{\partial t}(\rho v)+ \nabla\cdot(\rho v \otimes v)=\xi + \nabla \cdot \sigma,\label{linmom}\\
&\frac{\partial}{\partial t}(\rho_1 w)+ \nabla\cdot(\rho_1 v \otimes w)= g + \nabla \cdot \pi, \label{angmom}\\
&\frac{\partial}{\partial t}(\rho e) + \nabla \cdot(\rho v e)=\xi \cdot v + \nabla \cdot (\sigma \cdot v + \pi \cdot w - q)\label{energy}.
\end{align}
Here $\rho$ was the density of liquid crystals, $v \in \mathbb R^3$ was the velocity vector, $\xi$ was the external body force per unit volume, $\sigma$ was the stress tensor.
In Eq. \eqref{angmom}, $\rho_1=\rho |r|^2$ was the density of inertia moment,
%measured in terms of $kg\cdot m^{-1}$,
where $r$ was the effective position vector. Its norm $|r|$ was assumed to be constant in what follows.
$d\in \mathbb R^3$ was the director vector which described the preferred orientation of molecules. The material derivative of $d$ gave the director velocity $w \equiv \frac{d}{dt}(d)$. %$\frac{d}{dt}=(\frac{\partial}{\partial t} + v \cdot \nabla)$ meant material derivative,
$\pi$ was the director surface torque,
$g$ was the intrinsic body torque acting on the director. Both were measured per unit volume, describing effects of macroscopic flows on the microscopic structure.
In Eq. \eqref{energy}, $e=u+ \frac{1}{2}v^2+ \frac{\rho_1}{2\rho}w^2$ denoted the specific total energy density, including both translational and rotational kinetic energies as $\frac{1}{2}v^2$ and $\frac{\rho_1}{2\rho}w^2$, and the internal energy $u$.
$q$ represented the heat flux.
The symbol $\otimes$ denoted the tensor product, $(a \otimes b)_{ij}=a_i b_j$ for $a, b\in \mathbb R^3$.

In general, the stress tensor could be decomposed into four parts based on their different origins,
\begin{equation}\label{sigma_EL}
\sigma= -pI+ \sigma_V + \sigma_E+ \sigma_L,
\end{equation}
where $p, \sigma_V, \sigma_E, \sigma_L$ denoted the thermodynamic pressure, viscous stress, Ericksen stress for the static state and Leslie stress for the non-equilibrium state, respectively. From the perspective of viscous-elastic fluids, $\sigma_E$ was induced by elastic distortions,
$(\sigma_V+\sigma_L)$ together composed the stress induced by viscosity effects, in which $\sigma_V$ was the viscous stress for homogenous fluid flows, and $\sigma_L$ was the orientation-induced viscous stress.
A similar decomposition was introduced by Hieber and Pr\"{u}ss \cite{Hieber2017Dynamics}.
However, in their work, constitutive relations of stresses were given directly and without a clear physical interpretation.
Analogously, the director surface torque $\pi$ and director body torque $g$ could be separated as
\begin{equation}
\pi=\pi_V+\pi_E+\pi_L,\quad g=g_V+g_E+g_L,
\end{equation}
with the subscript $V, E, L$ denoting the viscous part, Ericksen part and Leslie part,  respectively.

\begin{rem}
Note in many cases the left-hand side of Eq. \eqref{angmom} was neglected for simplicity.
However, when the anisotropic axis was subjected to large accelerations \cite{Leslie1979Theory,Stewart2004The}, the director inertial term would play
an important role.
And it was kept in the current study to preserve the elegant mathematical structure of local conservation laws.
\end{rem}

\begin{rem}
In most previous works, the director vector $d$ was assumed to be uniform and was simplified as a unit vector in a unit sphere $\mathbb {S}^2$.
It was necessary to emphasize that, here we introduced $d\in \mathbb {R}^3$ as a 3-dimensional vector, to account for both the preferred orientation and the average length \cite{Flory1984Molecular} of rod-like molecules.
In this way, the vectorial theory could be extended to the case of mixtures of molecules with varied lengths \cite{Lekkerkerker1984On, Flory1984Molecular, He2016Isotropic}.
\end{rem}

\begin{rem}
In Eqs. \eqref{mass}-\eqref{energy}, local conservation laws of $(\rho, v, w, e)$ were directly introduced.
Physically, they were a straightforward consequence of the first law of thermodynamics (or the energy conservation law) and the Galileo principle \cite{jou1996extended}.
\end{rem}

The conservation of total energy Eq. \eqref{energy} could also be interpreted as the first law of thermodynamics, which stated that the internal energy was changed by either doing work or exchanging heat with the environment.
The following proposition gave an explicit expression of the first law of thermodynamics for the non-isothermal flows of liquid crystals, which was first considered by Leslie \cite{Leslie1968Some} from rational thermodynamics.
%, which included a heat supply without clear interpretations. We abandoned this term and mainly followed his derivations to deduce an objective equation for internal energy.

\begin{prop}
%[Leslie \cite{Leslie1968Some}]
Assuming that Eq. \eqref{energy} satisfied the principle of material frame-indifference,
then the first law of thermodynamics stated that, the change in internal energy resulted from the work done to the system and the heat exchanged with its surroundings:
\begin{equation} \label{internalenergy}
\rho\frac{du}{dt}
={\mathcal W} - \nabla \cdot q ,
\end{equation}
where the work ${\mathcal W}= \sigma^T: A + \pi^T : M - g \cdot N$ adopted a bilinear structure as the product of generalized forces $\sigma, \pi, (-g)$ and displacements $A, M, N$, and $q$ described the transportation of internal energy through heat flux.
Here $2A_{ij}=v_{i,j}+ v_{j,i}, ~2 \Omega_{ij}=v_{i,j} - v_{j,i},~N_i=w_i- \Omega_{ik} d_k,~  M_{ij}= w_{j,i}+ \Omega_{kj} d_{k,i}$.
%\begin{align*}
%&2A_{ij}=v_{i,j}+ v_{j,i}, \quad 2 \Omega_{ij}=v_{i,j} - v_{j,i},\\
%&N_i=w_i- \Omega_{ik} d_k, \quad   M_{ij}= w_{j,i}+ \Omega_{kj} d_{k,i}.
%\end{align*}
\end{prop}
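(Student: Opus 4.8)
The plan is to turn the total-energy balance \eqref{energy} into an evolution equation for $u$ alone by subtracting off the kinetic energies with the help of the other three balance laws, and then to recognise the surviving terms as a frame-indifferent rate of work.

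\textbf{Step 1 (reduction to material derivatives).} Using the continuity equation \eqref{mass}, any expression $\partial_t(\rho\phi)+\nabla\cdot(\rho v\phi)$ equals $\rho\,d\phi/dt$; since $\rho_1=\rho|r|^2$ with $|r|$ constant one also has $\partial_t\rho_1+\nabla\cdot(\rho_1 v)=0$, so the left sides of \eqref{linmom} and \eqref{angmom} collapse to $\rho\,dv/dt$ and $\rho_1\,dw/dt$, giving $\rho\,dv/dt=\xi+\nabla\cdot\sigma$ and $\rho_1\,dw/dt=g+\nabla\cdot\pi$. Expanding $e=u+\tfrac12 v^2+\tfrac{\rho_1}{2\rho}w^2$ and again using that $\rho_1/\rho$ is constant, the left side of \eqref{energy} becomes $\rho\,du/dt+v\cdot(\rho\,dv/dt)+w\cdot(\rho_1\,dw/dt)$. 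Inserting the momentum and angular-momentum balances and equating with the right side of \eqref{energy}, the power $\xi\cdot v$ cancels, and the product-rule identities $\nabla\cdot(\sigma\cdot v)-v\cdot(\nabla\cdot\sigma)=\sigma_{ij}v_{j,i}$ and $\nabla\cdot(\pi\cdot w)-w\cdot(\nabla\cdot\pi)=\pi_{ij}w_{j,i}$ leave
\[
\rho\frac{du}{dt}=\sigma_{ij}v_{j,i}+\pi_{ij}w_{j,i}-g\cdot w-\nabla\cdot q .
\]

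\textbf{Step 2 (frame indifference).} Write $v_{j,i}=A_{ij}-\Omega_{ij}$ and, from the definition of $M$, $w_{j,i}=M_{ij}-\Omega_{kj}d_{k,i}$; the right side above then becomes $\sigma^T:A+\pi^T:M-g\cdot w$ together with the \emph{spin} terms $-\sigma_{ij}\Omega_{ij}-\pi_{ij}\Omega_{kj}d_{k,i}$. Now impose that the rate of work in \eqref{energy} be invariant under a superposed time-dependent rigid rotation $Q(t)$: then $v_{j,i}$, $w_i$, $w_{j,i}$ acquire the extra skew contributions generated by $\dot Q Q^{T}$ (applied to $I$, to $d$, and to $\nabla d$, respectively), while $\sigma,\pi,g$ are objective; requiring the coefficient of the arbitrary skew tensor $\dot Q Q^{T}$ to vanish yields the constitutive restriction that $\sigma_{ij}-g_j d_i+\pi_{mj}d_{i,m}$ be symmetric under $i\leftrightarrow j$ --- the balance of moment of momentum. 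A direct index computation shows that this symmetry is exactly what forces $\sigma_{ij}\Omega_{ij}+\pi_{ij}\Omega_{kj}d_{k,i}+g_i\Omega_{ik}d_k=0$, so the spin terms reduce to $+g_i\Omega_{ik}d_k$ and combine with $-g\cdot w$ into $-g\cdot N$, with $N_i=w_i-\Omega_{ik}d_k$. This is exactly \eqref{internalenergy} with $\mathcal W=\sigma^T:A+\pi^T:M-g\cdot N$.

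The continuity-based rewriting and the two product-rule identities are purely mechanical. The delicate step is the frame-indifference argument of Step 2: one must write carefully how $\nabla v$, $w$ and $\nabla w$ transform under the rotation, read off the correct skew constraint, and then verify by index manipulation that this constraint annihilates precisely the leftover $\Omega$-terms. It is worth stressing that frame indifference is invoked only for the rate of work, since $\rho\,du/dt$ and $\nabla\cdot q$ are already objective scalars --- which is why the hypothesis is phrased as frame-indifference of Eq. \eqref{energy} itself.
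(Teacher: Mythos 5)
Your proposal is correct and follows essentially the same route as the paper: reduce the total-energy balance to $\rho\,du/dt=\sigma_{ij}v_{j,i}+\pi_{ij}w_{j,i}-g\cdot w-\nabla\cdot q$ via the momentum and angular-momentum equations, split $\nabla v$ into $A$ and $\Omega$, and invoke material frame-indifference to annihilate the leftover $\Omega$-contraction --- your symmetry condition on $\sigma_{ij}-g_jd_i+\pi_{mj}d_{i,m}$ is equivalent to the paper's statement that $[\sigma-\pi^T\cdot(\nabla d)+g\otimes d]:\Omega=0$. The only difference is that your Step 2 spells out, via superposed rigid rotations, why frame-indifference forces that symmetry, a point the paper simply asserts with a citation to Leslie.
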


\begin{proof}
By taking scalar product of Eq. \eqref{linmom} with $v$ and Eq. \eqref{angmom} with $w$, and using the continuity equation, Eq. \eqref{energy} reduced to the balance law of internal energy
\begin{equation}\label{internalenergy0}
  \rho\frac{du}{dt}=\sigma^T : \nabla v + \pi^T : \nabla w - g \cdot w - \nabla \cdot q,
\end{equation}
where the superscript $T$ meant transposition, the colon $:$ was the double inner product between two second-order tensors, \textit{i.e.}, $A:B=\sum_{i,j} A_{ij}B_{ji}$, and $(\nabla v)_{ij} = \partial v_j/ \partial x_i= v_{j,i}$.

Notice that quantities $\rho, u, q$ in Eq. \eqref{internalenergy0} were objective and frame-indifference, while $\nabla v$, $\nabla w$ and $w$ were not.
To satisfy the principle of material frame-indifference, following Leslie \cite{Leslie1968Some}, the right-hand side of Eq. \eqref{internalenergy0} could be rearranged into a convenient form:
\begin{align}
  \rho\frac{du}{dt}
  =&\sigma_{ij} v_{j,i} + \pi_{ij} w_{j,i} - g_i w_i  - \partial_i q_i \nonumber \\
  =&\sigma_{ij} (A_{ij} -\Omega_{ij}) + \pi_{ij} (M_{ij} - \Omega_{kj} d_{k,i})
     - g_i (N_i + \Omega_{ik}d_k) - \partial_i q_i \nonumber \\
  =&( \sigma_{ij} A_{ij}  + \pi_{ij} M_{ij} - g_i N_i- \partial_i q_i )
    +(- \sigma_{ij} \Omega_{ij}   - \pi_{ij} \Omega_{kj} d_{k,i} - g_i \Omega_{ik}d_k) \nonumber \\
  =& \sigma^T : A + \pi^T :M - g \cdot N - \nabla \cdot q +
  \underline{[\sigma - \pi^T \cdot (\nabla d) + g \otimes d ]:\Omega} \nonumber \\
  \equiv
  & \sigma^T : A + \pi^T :M - g \cdot N - \nabla \cdot q + \tilde \sigma : \Omega ,  \label{internalenergy2}
\end{align}
where in the fourth step we used relations $\Omega_{ij}=-\Omega_{ji}$ and $\pi_{ij} \Omega_{kj} d_{k,i}=(\pi^T)_{ji} (\nabla d)_{ik} \Omega_{kj}$ with symmetric and skew-symmetric parts of the velocity gradient defined as $2A_{ij}=v_{i,j}+ v_{j,i}$ and $2 \Omega_{ij}=v_{i,j} - v_{j,i}$.
%\begin{equation*}
%2A=(\nabla v)^T+\nabla v, \quad 2 \Omega=(\nabla v)^T-\nabla v,
%\end{equation*}
%and
%\begin{equation*}
%\tilde \sigma=\sigma - \pi^T \cdot (\nabla d) + g \otimes d.
%\end{equation*}

In above equations, $(A, M, N)$ were objective. $N$ was the relative angular velocity, representing the change rate of director with respect to background flows \cite{deGennes1995The}.
$N$ was also known as the co-rotational derivative of the director vector $d$, which was measured by an observer whose reference was carried by fluids and rotated with fluids \cite{jou1996extended}.
%Based on the principle of material frame-indifference
Since Eq. \eqref{internalenergy2} satisfied the principle of material frame-indifference \cite{Leslie1968Some}, the underlined term
$\tilde \sigma : \Omega=[\sigma - \pi^T \cdot (\nabla d) + g \otimes d]:\Omega=0$
and vanished in the following derivation.
\end{proof}

\subsection{Generalized Gibbs relation}
To close the system of partial differential equations in \eqref{mass}-\eqref{angmom} and \eqref{internalenergy}, constitutive relations for $(\sigma,\pi, g, q)$ were needed.
Following the Conservation-Dissipation Formalism \cite{zhu2015conservation}, we introduced a strictly concave mathematical entropy function
\begin{equation}
\eta = \rho s(\nu, u, d, \nabla d, C, K, l,h).
\end{equation}
Here $\nu=1/\rho$ was the specific volume, $(C, K)$ were tensors with the same size of $(\sigma,\pi)$, and $(l,h)$ were vectors with the same size of $(g,q)$.
$(C, K, l)$ were used to describe the viscous-elastic effects of nematic liquid crystal flows, and $h$ characterized the heat conduction induced by temperature gradients.
Notice that $(\nu, u, d, \nabla d)$ represented conserved variables, while $(C, K, l,h)$ represented dissipative variables, which would be specified later.
In equilibrium, $C=K=0, l=h=0$, %then $s$ reduced to a strictly concave function as
we had
\begin{equation}
s|_{C=K=0,~ l=h=0} \equiv s_0(\nu, u, d, \nabla d),
\end{equation}
where $s_0$ was the equilibrium entropy.

With the entropy function in hand, the non-equilibrium temperature $\theta$ and thermodynamic pressure $p$ were defined by
\begin{equation}
\theta^{-1}=\frac{\partial s}{\partial u}, \quad \theta^{-1}p=\frac{\partial s}{\partial \nu}.
\end{equation}
Consequently, the equilibrium temperature was given by $T=(\frac{\partial s_0}{\partial u})^{-1}$.

In the next, we were going to examine the time evolution of the entropy. And a manifestation of the second law of thermodynamics would be shown.
Firstly, we assumed the entropy $s(\nu, u, d, \nabla d, C, K, l,h)$ to be an isotropic function of $d$ and $\nabla d$,
or in other words it was objective with respect to conserved variables.
%unaffected by rigid motions.
As a direct consequence, the tensor $(
\frac{\partial s}{\partial d}\otimes d +  \frac{\partial s}{\partial \nabla d} \cdot \nabla d - \nabla d \cdot  \frac{\partial s}{\partial \nabla d}
)$ was symmetric. This result was first pointed out by Ericksen \cite{Ericksen1961Conservation} and Leslie \cite{Leslie1968Some} in the 1960s, and we represented it as Lemma \ref{tensor symmetric} in Appendix to maintain the self-integrity of formulation.

Secondly,
%we proceeded to study the entropy generation with time.
%establish constitutive relations of $(\sigma, \pi, g, q)$.
to simplify notations, we employed a differential operator $\mathcal D$ as $\mathcal D s=(\rho s)_t +\nabla \cdot (v \rho s)$.
Utilizing the continuity equation \eqref{mass}, one deduced that $\mathcal Ds=\rho {ds}/{dt}$.
Then, based on the conservation laws \eqref{mass}-\eqref{angmom} and balance law \eqref{internalenergy}, it was direct to calculate the evolution equation of the entropy function as follows:
\begin{align*}
 \eta_t + \nabla \cdot (v \eta)
\equiv& \mathcal D s(\nu, u, d, \nabla d, C, K, l,h)\\
=&\frac{\partial s}{\partial \nu} \mathcal D\nu + \frac{\partial s}{\partial u} \mathcal D u + \frac{\partial s}{\partial d} \cdot  \mathcal D d
+(\frac{\partial s}{\partial \nabla d})^T:  \mathcal D(\nabla d) \\
&+(\frac{\partial s}{\partial C})^T: \mathcal D C + (\frac{\partial s}{\partial K})^T: \mathcal D K + \frac{\partial s}{\partial l} \cdot
\mathcal D l + \frac{\partial s}{\partial h}\cdot  \mathcal D h \\
%=&\theta^{-1}p\nabla \cdot v + \theta^{-1} (\sigma: A + \pi : M - g \cdot N - \nabla \cdot q)
%+\frac{\partial s}{\partial d} \cdot  \mathcal D d
%+(\frac{\partial s}{\partial \nabla d})^T:  \mathcal D(\nabla d) \\
%&+(\frac{\partial s}{\partial C})^T: \mathcal D C + (\frac{\partial s}{\partial K})^T: \mathcal D K + \frac{\partial s}{\partial l} \cdot
%\mathcal D l + \frac{\partial s}{\partial h}\cdot  \mathcal D h \\
=& \theta^{-1} [(pI+\sigma^T): A + \pi^T : M - g \cdot N - \nabla \cdot q ]
+\frac{\partial s}{\partial d} \cdot  \mathcal D d
+(\frac{\partial s}{\partial \nabla d})^T:  \mathcal D(\nabla d) \\
&+(\frac{\partial s}{\partial C})^T: \mathcal D C + (\frac{\partial s}{\partial K})^T: \mathcal D K + \frac{\partial s}{\partial l}\cdot  \mathcal D l + \frac{\partial s}{\partial h}\cdot  \mathcal D h.
\end{align*}
%where in the last step, the relations $\frac{\partial s}{\partial d}=-\frac{\partial s}{\partial d}$ and $\frac{\partial s}{\partial \nabla d}=-\frac{\partial s}{\partial \nabla d}$ were used. Here $F=F(d, \nabla d, \cdots)$ denoted the Oseen-Frank energy \cite{Chandrasekhar1992Liquid}, which may also depend on the density and temperature,
%%$\rho, T$
%\textit{etc}.
%Since the entropy (or energy) origined from different perspectives was additive, these relations were natural and would be examined later.

By making use of the following relations (See Lemma \ref{lemma of Dd} in Appendix)
%definitions of $N$ and $M$, we had
\begin{equation} \label{Dd}
\mathcal D d= \rho(N+\Omega \cdot d), \quad
\mathcal D (\nabla d) = \rho[ M -(\nabla d) \cdot \Omega + \Omega \cdot(\nabla d)-A \cdot (\nabla d)],
\end{equation}
and noticing $A^T=A$, $\Omega^T=-\Omega$, we had
\begin{align*}
&\frac{\partial s}{\partial d} \cdot  \mathcal D d + (\frac{\partial s}{\partial \nabla d})^T:  \mathcal D(\nabla d)
=\rho \frac{\partial s}{\partial d_i} w_i + \rho \frac{\partial s}{\partial d_{i,j}} \frac{d}{dt} {d_{i,j}}
\\
=&\rho \frac{\partial s}{\partial d_i} (N_i + \Omega_{ik} d_k) +
\rho \frac{\partial s}{\partial d_{i,j}} (M_{ji}-d_{k,j} \Omega_{ki} + \Omega_{jk}d_{i,k}-A_{jk} d_{i,k}) \\
=&\rho \frac{\partial s}{\partial d_i} N_i
+ \rho \frac{\partial s}{\partial d_{i,j}} (M_{ji}-A_{jk} d_{i,k})
+ \rho \frac{\partial s}{\partial d_i} \Omega_{ik} d_k
+ \rho \frac{\partial s}{\partial d_{i,j}} (-d_{k,j} \Omega_{ki} + \Omega_{jk}d_{i,k})\\
=&\rho \frac{\partial s}{\partial d} \cdot N
+ \rho \frac{\partial s}{\partial \nabla d} :  M
- \rho (\nabla d \cdot \frac{\partial s}{\partial \nabla d}) :  A
-\underline{
\rho (
\frac{\partial s}{\partial d}\otimes d +  \frac{\partial s}{\partial \nabla d} \cdot \nabla d - \nabla d \cdot  \frac{\partial s}{\partial \nabla d}
) :\Omega
} \\
=&\rho \frac{\partial s}{\partial d} \cdot N + \rho \frac{\partial s}{\partial \nabla d} :  M - \rho ( \nabla d \cdot \frac{\partial s}{\partial \nabla d} ) :  A .
\end{align*}
The underlined term
%including tensor $\Omega$
in the fourth step vanished, since $\Omega$ was anti-symmetric while
$(\frac{\partial s}{\partial d}\otimes d +  \frac{\partial s}{\partial \nabla d} \cdot \nabla d - \nabla d \cdot  \frac{\partial s}{\partial \nabla d} )$ was symmetric by Lemma \ref{tensor symmetric}.
Substituting the above formula into $\mathcal D s$, we arrived at
\begin{equation}
\begin{split}
\mathcal D s
=&
\nabla \cdot ({-\theta^{-1}q}) + \theta^{-1}(pI+\sigma^T): A - \rho ( \nabla d \cdot \frac{\partial s}{\partial \nabla d} ):A
+\theta^{-1} \pi^T : M  + \rho \frac{\partial s}{\partial \nabla d} :  M  \\
& -\theta^{-1} g \cdot N + \rho \frac{\partial s}{\partial d} \cdot N + q\cdot \nabla \theta^{-1}
+s_C^T: \mathcal D C + s_K^T: \mathcal D K + s_l \cdot \mathcal D l + s_h \cdot \mathcal D h \\
=&-\nabla \cdot ({\theta^{-1}q})
+\underline{
[\theta^{-1}\sigma_E^T  - \rho ( \nabla d \cdot \frac{\partial s}{\partial \nabla d} )]:A + (\theta^{-1} \pi_E^T + \rho \frac{\partial s}{\partial \nabla d}):  M  - (\theta^{-1} g_E - \rho \frac{\partial s}{\partial d}) \cdot N
}  \\
&+\underline{
( s_C^T:\mathcal D C+ \theta^{-1}\sigma_V^T:A )}
+\underline{
(s_K^T: \mathcal D K + \theta^{-1} \pi_V^T : M)}
+\underline{
(s_l \cdot  \mathcal D l-\theta^{-1} g_V \cdot N)}
 \\
&
+\underline{
(s_h \cdot  \mathcal D h + q\cdot \nabla \theta^{-1})}
+\underline{
\theta^{-1}(\sigma_L^T:A + \pi_L^T:M - g_L \cdot N)
}
  \\
\equiv & \nabla \cdot J^f +epr0+ epr1+ epr2+  epr3+ epr4+ epr5,
\end{split}
\end{equation}
where decompositions of $\sigma= -pI+ \sigma_V + \sigma_E+ \sigma_L$, $\pi=\pi_V+\pi_E+\pi_L$, and $g=g_V+g_E+g_L$ were utilized in the second step.
The entropy flux was $J^f=-\theta^{-1}q$, and the entropy production rate $\Sigma^f \geq 0$ included six separate contributions, $epr0, epr1, \cdots, epr5$, which would be discussed in detail later.

\subsection{Decomposition of entropy production rate}

Notice that the decomposition of the entropy production rate $\Sigma^f$ was not unique.
%, only needed to guarantee the non-negativeness of $\Sigma^f$.
In this paper, we adopted a straightforward decomposition mainly based on different physical origins of variables, including the Ericksen (static) part, Leslie (non-equilibrium) part and viscous part.
Meanwhile, we specified $(\sigma_V,\pi_V, g_V, q)$ to be conjugate variables of $(C,K,l,h)$ with respect to the specific entropy function $s$, respectively.
%As to $epr1$
%Utilizing the decompositions of $\sigma, \pi$ and $g$, one could rearrange the entropy productions due to different origins as
%\begin{align}
%& \label{epr1}
%epr1=( s_C^T:\mathcal D C+ \theta^{-1}\sigma_V:A )
%+[\theta^{-1}\sigma_E+\rho\frac{\partial s}{\partial \nabla d} \cdot (\nabla d)^T]:A
%+\theta^{-1}\sigma_L:A, \\
%&\label{epr2}
%epr2=(s_K^T: \mathcal D K + \theta^{-1} \pi_V : M) + (\theta^{-1} \pi_E - \rho \frac{\partial s}{\partial \nabla d}):  M , \\
%&\label{epr3}
%epr3=(s_l \cdot  \mathcal D l-\theta^{-1} g_V \cdot N) - (\theta^{-1} g_E + \rho \frac{\partial s}{\partial d}) \cdot N - \theta^{-1} g_L \cdot N,\\
%&\label{epr4}
%epr4=s_h \cdot  \mathcal D h + q\cdot \nabla \theta^{-1} .
%\end{align}
%\begin{equation}\label{epr1}
%epr1=( s_C^T:\mathcal D C+ \theta^{-1}\sigma_V:A )
%+[\theta^{-1}\sigma_E+\rho\frac{\partial s}{\partial \nabla d} \cdot (\nabla d)^T]:A
%+\theta^{-1}\sigma_L:A,
%\end{equation}
%\begin{equation}\label{epr2}
%epr2=(s_K^T: \mathcal D K + \theta^{-1} \pi_V : M) + (\theta^{-1} \pi_E - \rho \frac{\partial s}{\partial \nabla d}):  M ,
%\end{equation}
%\begin{equation}\label{epr3}
%epr3=(s_l \cdot  \mathcal D l-\theta^{-1} g_V \cdot N) - (\theta^{-1} g_E + \rho \frac{\partial s}{\partial d}) \cdot N - \theta^{-1} g_L \cdot N,
%\end{equation}
%\begin{equation}\label{epr4}
%epr4=s_h \cdot  \mathcal D h + q\cdot \nabla \theta^{-1} .
%\end{equation}

Recall that, the subscript $E$ denoted the static deformation value, $i.e.$, the Ericksen part, which as a consequence made no contribution to the entropy production rate:
\begin{equation}\label{epr0}
epr0={
[\theta^{-1}\sigma_E^T  - \rho ( \nabla d \cdot \frac{\partial s}{\partial \nabla d} )]:A + (\theta^{-1} \pi_E^T + \rho \frac{\partial s}{\partial \nabla d}):  M  - (\theta^{-1} g_E - \rho \frac{\partial s}{\partial d}) \cdot N
}=0.
\end{equation}
Due to the arbitrariness of $A, M$ and $N$, the Ericksen parts of the stress tensor, director surface torque, and director body torque became
\begin{equation}\label{ericksen}
\sigma_E=  \rho \theta (\nabla d \cdot \frac{\partial s}{\partial \nabla d})^T,\quad
\pi_E   = -\rho \theta (\frac{\partial s}{\partial \nabla d})^T,\quad
g_E     =  \rho \theta \frac{\partial s}{\partial d},
\end{equation}
respectively.

To guarantee the second law of thermodynamics, we further assumed the entropy production rates from the classical viscous part and Leslie part were both non-negative,
%\textit{i.e.},
\begin{align*}
epr1+ epr2+ epr3+ epr4
=& {( s_C^T:\mathcal D C+ \theta^{-1}\sigma_V^T:A )} +
(s_K^T: \mathcal D K + \theta^{-1} \pi_V^T : M)  \\
&+(s_l \cdot  \mathcal D l-\theta^{-1} g_V \cdot N)
 +(s_h \cdot  \mathcal D h + q\cdot \nabla \theta^{-1}) \geq 0,
\end{align*}
and $epr5 \geq 0$.
Given the fact that $\sigma_V$, $\pi_V$, $g_V$ and $q$ were conjugate variables of $C$, $K$, $l$ and $h$ with respect to the specific entropy $s$, we had
\begin{equation}
\frac{\partial s}{\partial C} =\sigma_V, \quad  \frac{\partial s}{\partial K} =\pi_V, \quad \frac{\partial s}{\partial l}=g_V, \quad \frac{\partial s}{\partial h}=q.
\end{equation}
Having the expression of entropy production rate in hand, CDF \cite{zhu2015conservation} suggested following constitutive equations:
\begin{equation}
    \begin{pmatrix}
       (\rho C)_t+\nabla\cdot(\rho v \otimes C) + \theta^{-1}A\\
       (\rho K)_t+\nabla\cdot(\rho v \otimes K) + \theta^{-1}M\\
       (\rho l)_t+\nabla\cdot(\rho v \otimes l) -\theta^{-1} N\\
       (\rho h)_t+\nabla\cdot(\rho v \otimes h) +\nabla \theta^{-1}
    \end{pmatrix}
		={\Upsilon} \cdot
     \begin{pmatrix}
      s_C \\ s_K \\s_l\\ s_h
     \end{pmatrix},
\end{equation}
where the nonlinear dissipation matrix ${\Upsilon}={\Upsilon}(\nu, u, d, \nabla d, C, K, l,h)$ was positive definite, and was readily dependent on both conserved and dissipative variables.

Moreover, the remaining Leslie part of the entropy production rate was
\begin{equation}\label{epr_1}
epr{5} \equiv \theta^{-1}(\sigma_L^T:A + \pi_L^T:M - g_L\cdot N)\geq0.
\end{equation}
This was exactly the total entropy production rate in the classical E-L theory, except that the dissipation caused by the director surface torque $\pi_L^T:M$ was also included (See Eq. (3.1.21) in ref. \cite{Chandrasekhar1992Liquid}).
%Specifically, by choosing $\pi_L=0$ and following the same argument of De Anna \textit{et al.} \cite{Anna2017Non}, one would arrive at a general form of $\sigma_L$ and $g_L$ as follows:
Specifically, by choosing $\pi_L=0$ and following the same argument of Leslie \cite{Leslie1968Some,Chandrasekhar1992Liquid}, one would arrive at the most widely adopted form of $\sigma_L$ and $g_L$ as follows:
\begin{align}
 \sigma_L =&\alpha_1 (d^T \cdot A \cdot d) d \otimes d  + \alpha_2 N \otimes d  +\alpha_3 d \otimes N
+ \alpha_4 A  +\alpha_5 d \otimes  (A \cdot d)  +\alpha_6 (A \cdot d) \otimes d, \\
g_L=&(\alpha_2-\alpha_3)N + (\alpha_5- \alpha_6)A \cdot d,
\end{align}
where
$\alpha_i=\alpha_i(\rho, \theta)~(i=1,\cdots,6)$ was called the classical Leslie coefficient, depending on the density $\rho$ and non-equilibrium temperature $\theta$.
According to Parodi's relation \cite{Chandrasekhar1992Liquid}, these coefficients were connected by $\alpha_2 + \alpha_3=\alpha_6 -\alpha_5$, and the number of independent coefficients reduced to five.
In general, the Leslie coefficients could not be identified separately, except for $\alpha_4$. Fortunately, they were
combinations of Miesowicz viscosities, which could be measured accurately.
We referred to ref. \cite{Stewart2004The} for the physical interpretation of Leslie coefficients and their relations with experimentally measurable quantities.

By substituting above formulas into $epr5$, one had
\begin{align*}
\theta\cdot epr{5}
%=&\sigma_L: A -  g \cdot N \\
=& \alpha_1 (d^T \cdot A \cdot d)^2 + (\alpha_2 + \alpha_3 - \alpha_5 + \alpha_6) (d^T \cdot A \cdot N)  + \alpha_4 |A|^2  \\
&+(\alpha_5 +\alpha_6) |A \cdot d|^2 - (\alpha_2-\alpha_3) |N|^2,
\end{align*}
where $A^T=A$ was used, and $|A|^2\equiv A^T:A$, $|N|^2 \equiv N^T \cdot N$.
Since variables $(d, A, N)$ were arbitrary, we had to put restrictions on the Leslie coefficients in order to guarantee the non-negativity of entropy production rate $epr5$, which was stated in the following.
\begin{thm}\label{2nd law}
The Leslie contribution to the entropy production rate in $epr5$ was non-negative if and only if
\begin{equation}\label{ineq of 2nd law}
\begin{aligned}
& \alpha_4 \geq 0, \quad \alpha_5 + \alpha_6\geq 0, \\
&(\alpha_1+\alpha_5+\alpha_6)d_1^2 +\alpha_4 \geq 0,\\
& 4(\alpha_3 -\alpha_2)(\alpha_1+\alpha_5+\alpha_6)d_1^2 + 4(\alpha_3 -\alpha_2) \alpha_4
- (\alpha_2 + \alpha_3 - \alpha_5 + \alpha_6)^2 d_1^2 \geq 0, \\
& 4(\alpha_3 -\alpha_2)[(\alpha_5+\alpha_6)d_1^2 + 2 \alpha_4] - (\alpha_2 + \alpha_3 - \alpha_5 + \alpha_6)^2 d_1^2 \geq 0,
\end{aligned}
\end{equation}
where $d_1 \equiv |d| \geq 0$ was the norm of director vector. Above inequalities held for any $d_1 \in [d_{min}, d_{max}]$, where $[d_{min}, d_{max}]$ stood for the length variation range of different liquid crystal molecules when the temperature was changed.
\end{thm}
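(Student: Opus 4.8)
The plan is to rephrase the inequality $epr5\ge 0$ as positive semidefiniteness of an explicit finite‑dimensional quadratic form, use rotational invariance to fix the direction of $d$, and then block‑diagonalize. The scalar $\theta\cdot epr5$ displayed just before the theorem is an isotropic function of the symmetric tensor $A$, the vector $N$ and the vector $d$: only the invariants $d^T\!\cdot A\cdot d$, $d^T\!\cdot A\cdot N$, $|A|^2$, $|A\cdot d|^2$ and $|N|^2$ appear. Hence, for fixed $d_1=|d|$, I would rotate the frame so that $d=d_1 e_1$; this leaves $epr5$ unchanged while still letting $A$ range over all symmetric matrices and $N$ over $\mathbb R^3$. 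After this substitution $\theta\cdot epr5$ becomes a genuine quadratic form $Q$ in the nine entries $(A_{11},A_{22},A_{33},A_{12},A_{13},A_{23},N_1,N_2,N_3)$ with coefficients built from $\alpha_1,\dots,\alpha_6$ and $d_1$, and since $A,N$ are arbitrary, ``$epr5\ge0$ for every admissible configuration with this $d_1$'' is exactly ``$Q\succeq 0$''. The key structural observation is then that the only mixed terms in $Q$ are of the form $A_{1k}N_k$ (from $d^T\!\cdot A\cdot N$), the other invariants producing only pure squares once $d=d_1e_1$; so $Q$ is block diagonal, splitting into a one‑dimensional piece in each of $A_{22},A_{33},A_{23}$ (carrying only the $\alpha_4$ weight), a $2\times2$ block in $(A_{11},N_1)$, and two mutually identical $2\times2$ blocks in $(A_{12},N_2)$ and $(A_{13},N_3)$. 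Thus $Q\succeq0$ if and only if each of these blocks is positive semidefinite.

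Next I would apply the elementary criterion that a symmetric $1\times1$ matrix is positive semidefinite iff its entry is nonnegative, and a symmetric $2\times2$ matrix iff its two diagonal entries and its determinant are nonnegative. The $A_{22},A_{33},A_{23}$ piece yields $\alpha_4\ge0$. The $(A_{11},N_1)$ block yields the two inequalities of \eqref{ineq of 2nd law} that contain $\alpha_1$ — nonnegativity of its $A_{11}^2$‑coefficient and of its determinant. The $(A_{12},N_2)$ block yields nonnegativity of its $A_{12}^2$‑coefficient and of its determinant, the latter being precisely the last inequality of \eqref{ineq of 2nd law}. Finally, since $\alpha_i=\alpha_i(\rho,\theta)$ is independent of $d_1$, I would require all these $d_1$‑parametrized block conditions for every $d_1\in[d_{min},d_{max}]$ and repackage the $d_1$‑dependent positivity conditions into the $d_1$‑free form $\alpha_4\ge0$, $\alpha_5+\alpha_6\ge0$; the determinant conditions then have to hold throughout the interval, which is exactly the content of \eqref{ineq of 2nd law}. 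The converse direction is immediate: if the listed inequalities hold on $[d_{min},d_{max}]$, every block is positive semidefinite for every admissible $d_1$, hence $Q$, hence $epr5$, is nonnegative.

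The main obstacle I expect is the bookkeeping in the block‑diagonalization. One must handle carefully that $A$ is symmetric, so that $A_{12}$ and $A_{21}$ are the same variable and $\alpha_4|A|^2$ contributes weight $2\alpha_4$ (not $\alpha_4$) to $A_{12}^2$ — this asymmetry is what produces the $2\alpha_4$ in the last inequality of \eqref{ineq of 2nd law} against the bare $\alpha_4$ in the third — and one must check that no spurious cross terms such as $A_{11}A_{22}$ or $A_{12}A_{13}$ survive, so that the block structure is genuine. A secondary point is the quantifier over $d_1$: one should verify that the finite list \eqref{ineq of 2nd law} is equivalent to, not merely sufficient for, the family of block‑PSD conditions as $d_1$ ranges over $[d_{min},d_{max}]$ — in particular that the sign conditions hidden in the $2\times2$ diagonal entries (e.g.\ $\alpha_3\ge\alpha_2$) are genuinely recovered, which typically uses the mild physical hypothesis $\alpha_4>0$.
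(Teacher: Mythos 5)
Your proposal is correct and follows essentially the same route as the paper: fix a frame with $d=d_1e_1$, observe that $\theta\cdot epr5$ splits into standalone squares in $A_{22},A_{33},A_{23}$ plus $2\times2$ blocks coupling $A_{11}$ with $N_1$ and $A_{12}$ with $N_2$, and impose positive semidefiniteness of each block (the paper merely packages the two $2\times2$ blocks into a single $4\times4$ matrix $\mathcal M$ and, instead of your redundant $(A_{13},N_3)$ block, rotates $N$ into the $e_1$--$e_2$ plane so that $A_{13}$ decouples entirely). The one substantive divergence is the necessity of $\alpha_5+\alpha_6\ge0$: the paper reads it off as the coefficient of the decoupled $A_{13}^2$ term, whereas in your setup the $(A_{13},N_3)$ block only yields $(\alpha_5+\alpha_6)d_1^2+2\alpha_4\ge0$, and your final ``repackaging'' into the $d_1$-free inequality is not justified when $d_1$ ranges over a bounded interval with $\alpha_4>0$ (a caveat that, to be fair, also affects the paper's own computation of that coefficient).
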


We mentioned that Ericksen and Leslie imposed restrictions for above coefficients when the nematic liquid crystals were assumed to be isothermal and incompressible.
We considered a compressible and non-isothermal system, and thus abandoned the assumptions that $\nabla \cdot v=0$ and $|d|=1$.
Please see Lemma \ref{proof of the 2nd law} in Appendix for a proof.

\subsection{Non-isothermal and compressible vectorial models}
In order to derive a concrete model, we presented some classical choices of the entropy function and dissipation matrix as an illustration.
We selected the specific entropy function as
\begin{equation}\label{quad specific entropy}
s(\nu, u, d, \nabla d, C, K, l,h)
=s_0(\nu, u, d, \nabla d)- \frac{1}{2 \beta_1}|C|^2 - \frac{1}{2\beta_2}|K|^2 - \frac{1}{2\beta_3}|l|^2 - \frac{1}{2 \beta_4}|h|^2,
\end{equation}
where the equilibrium entropy $s_0(\nu, u, d, \nabla d)$ was a strictly concave function, $\nu=1/\rho$, and $\beta_1, \cdots, \beta_4$ were positive coefficients related to different relaxation times.
Since the entropy function depended on non-equilibrium variables in a quadratic form, direct calculations showed that
%we had
\begin{equation}
 C=-\beta_1 \sigma_V, \quad   K=-\beta_2 \pi_V, \quad  l=- \beta_3 g_V, \quad  h=- \beta_4 q.
\end{equation}
With a diagonal and constant matrix
${\Upsilon} = \frac{1}{\theta}diag(\frac{1}{\gamma_1}, \frac{1}{\gamma_2}, \frac{1}{\gamma_3}, \frac{1}{\theta \gamma_4})$, we had
\begin{equation} \label{Gamma}
{\Upsilon} \cdot
     \begin{pmatrix}
      \sigma_V \\ \pi_V \\g_V\\ q
     \end{pmatrix}
     =
     \begin{pmatrix}
\frac{\sigma_V}{\theta \gamma_1} \\ \frac{\pi_V}{\theta \gamma_2} \\ \frac{g_V}{\theta  \gamma_3} \\ \frac{q}{{\theta}^2  \gamma_4}
\end{pmatrix},
\end{equation}
where the parameters $\gamma_i~(i=1,2,3,4)$ were all positive. $\gamma_1$ and $\gamma_2$ denoted the generalized viscosity, $\gamma_3$ was the rotational friction coefficient, and $\gamma_4$ was the thermal conductivity of liquid crystals.

Now it was time to summarize our new vectorial model for non-isothermal and compressible flows of nematic liquid crystals. This model included the director inertial term and discarded the assumption on $|d|=1$, and thus was applicable to molecular mixtures.
With the above choices of entropy and dissipation matrix, the model read
\begin{equation}\label{nonisothermal EL}
\left\{
\begin{aligned}
&\frac{\partial}{\partial t}\rho + \nabla \cdot (\rho v)=0,\\
&\frac{\partial}{\partial t}(\rho v)+ \nabla\cdot(\rho v \otimes v)
                  =\xi + \nabla \cdot ( -pI+ \sigma_V + \sigma_E+ \sigma_L),\\
&\frac{\partial}{\partial t}(\rho_1 w)+ \nabla\cdot(\rho_1 v \otimes w)= (g_V+g_E+g_L) + \nabla \cdot (\pi_V+\pi_E+\pi_L), \\
&
\frac{\partial}{\partial t}(\rho u)+ \nabla\cdot(\rho v u)=( -pI+ \sigma_V + \sigma_E+ \sigma_L)^T: A + (\pi_V+\pi_E+\pi_L)^T : M \\
&~~~~~~~~~~~~~~~~~~~~~~~~~~~~~ - (g_V+g_E+g_L) \cdot N - \nabla \cdot q,\\
%=\sigma: A + \pi : M - g \cdot N - \nabla \cdot q,\\
&  \beta_1 [(\rho \sigma_V)_t+\nabla\cdot(\rho v \otimes \sigma_V)] - \theta^{-1}A= - \frac{\sigma_V}{\theta \gamma_1},\\
&       \beta_2 [(\rho \pi_V)_t+\nabla\cdot(\rho v \otimes \pi_V)] - \theta^{-1}M = - \frac{\pi_V}{\theta \gamma_2}, \\
&       \beta_3 [(\rho g_V)_t+\nabla\cdot(\rho v \otimes g_V)] + \theta^{-1} N= - \frac{g_V}{\theta \gamma_3},\\
&       \beta_4 [  (\rho q)_t + \nabla \cdot(\rho v \otimes q)] -\nabla \theta^{-1} = - \frac{q}{{\theta}^2 \gamma_4},\\
& \sigma_E=  \rho \theta (\nabla d \cdot \frac{\partial s}{\partial \nabla d})^T,\quad
\pi_E   = -\rho \theta (\frac{\partial s}{\partial \nabla d})^T,\quad
g_E     =  \rho \theta \frac{\partial s}{\partial d}, \\
&  \sigma_L =\alpha_1 (d^T \cdot A \cdot d) d \otimes d  + \alpha_2 N \otimes d  +\alpha_3 d \otimes N
+ \alpha_4 A  \\
&~~~~~~ +\alpha_5 d \otimes  (A \cdot d)  +\alpha_6 (A \cdot d) \otimes d, \\
&\pi_L=0,\quad
g_L=(\alpha_2-\alpha_3)N + (\alpha_5- \alpha_6)A \cdot d.
\end{aligned}
\right.
\end{equation}
Here $2A=(\nabla v)^T+\nabla v$, $2 \Omega=(\nabla v)^T-\nabla v$, $N=w- \Omega\cdot d$, $M=\nabla w  + \nabla d \cdot \Omega$.
$\theta=(\frac{\partial s}{\partial u})^{-1}$ was the non-equilibrium temperature, and $p=\theta (\frac{\partial s}{\partial {\nu}})$ was the thermodynamical pressure.
The Leslie coefficients $\alpha_1, \cdots, \alpha_6$, as functions of density $\rho$ and non-equilibrium temperature $\theta$, should also satisfy constraints given in \eqref{ineq of 2nd law}.

\subsection{Isothermal and incompressible vectorial models}
In the previous section, we have arrived at a general model for nematic liquid crystals under non-isothermal and compressible conditions.
When liquid crystals were maintained at constant temperature $T$, and the director $d$ was approximated to be of unit length, we arrived at a much simpler model, which has been widely used in previous studies.
For this case, $d(x,t) \in \mathbb {S}^2$, where $\mathbb {S}^2$ denoted a unit sphere of $\mathbb{R}^3$.

Notice that the first law of thermodynamics held automatically under the isothermal condition.
As a consequence, the governing equations reduced to the first three conservation laws in \eqref{nonisothermal EL}.
To show that the classical E-L theory was a special limit of our model, we set the density $\rho \equiv 1$ and further assumed the inertia moment density was negligible under the isothermal and incompressible condition by setting $\rho_1 \rightarrow 0$.
The conservation laws of mass, momentum and angular momentum for hydrodynamics of nematic liquid crystals became
\begin{align}
&\nabla \cdot v =0,\\
&\frac{\partial v}{\partial t}+ v \cdot \nabla v
                  =\xi -\nabla p + \nabla \cdot (\sigma_V + \sigma_E+ \sigma_L),\\
&(g_V+g_E+g_L) + \nabla \cdot (\pi_V+\pi_E+\pi_L)=0 .
\end{align}

Under the isothermal condition, the elastic energy (or free energy) function for nematic liquid crystals was first studied by Oseen and Zocher dated back to the 1920s \cite{Stewart2004The}. Later, Frank \cite{Frank1958I} constructed a concrete energy form based on all possible distortions of the director in 1958.
Following the classical Oseen-Frank elastic energy, we had the equilibrium entropy $s_0$ in \eqref{quad specific entropy} expressed as
\begin{equation}
s_0(\nu, u,d, \nabla d)|_{\nu=1, \theta=T}=-\frac{1}{T} W_F(d, \nabla d),
\end{equation}
in which
\begin{equation}\label{Oseen-Frank}
2 W_F(d, \nabla d)=k_1 (\nabla \cdot d)^2 +k_2 [d \cdot (\nabla \times d)]^2 + k_3|d \times (\nabla \times d)|^2+
(k_2+k_4)[tr(\nabla d)^2 - (\nabla \cdot d)^2] .
\end{equation}
Here $k_1, k_2, k_3, (k_2+k_4)$, depending on the density $\rho$ and temperature $T$, stood for the pure splay, pure twist, pure bend and saddle-splay constants respectively.
Furthermore, since the integral of the fourth term of Eq. \eqref{Oseen-Frank}
\begin{equation*}
(k_2+k_4)\int [tr(\nabla d)^2 - (\nabla \cdot d)^2] dx,
\end{equation*}
depended solely on boundary values of $d$, the saddle-splay term could be discarded for planar boundary conditions \cite{Ball2017Mathematics}.
We suggested refs \cite{Durand1969Quasielastic, Lee1986Computations,Allen1988Calculation} for experimental, theoretical and numerical investigations on the elastic constants $k_i$ $(i=1,2,3,4)$.
%$k_2+k_4$ the saddle-splay constant.

In this case, we had $\frac{\partial s}{\partial d}=-\frac{1}{T}\frac{\partial W_F}{\partial d},~
\frac{\partial s}{\partial \nabla d}=-\frac{1}{T}\frac{\partial W_F}{\partial \nabla d},$
then the Ericksen parts of the stress tensor, the director surface torque and the director body torque in Eq. \eqref{ericksen} became
\begin{equation}\label{isothermal ericksen}
\sigma_E= -  (\nabla d \cdot \frac{\partial W_F}{\partial \nabla d})^T,\quad
\pi_E   =    (\frac{\partial W_F}{\partial \nabla d})^T,\quad
g_E     = -  \frac{\partial W_F}{\partial d},
\end{equation}
by noticing the fact that the non-equilibrium temperature coincided with the equilibrium temperature under isothermal conditions, $\theta=T$.
In the stationary limit, $\beta_1, \beta_2, \beta_3 \rightarrow 0$,
by choosing the dissipation matrix as in Eq. \eqref{Gamma} and taking Maxwellian iteration, we had
\begin{equation}
\sigma_V=\gamma_1 A, \quad \pi_V= \gamma_2 M, \quad g_V= -\gamma_3 N.
\end{equation}

Summarizing above equations, we arrived at the classical E-L theory for nematic liquid crystals under the isothermal and incompressible condition as
\begin{equation} \label{isothermal EL}
\left\{
\begin{aligned}
&\nabla \cdot v =0,\\
&\frac{\partial v}{\partial t}+ v \cdot \nabla v
                  =\xi -\nabla p + \nabla \cdot (\sigma_V + \sigma_E+ \sigma_L), \\
&\sigma_V+\sigma_E+ \sigma_L
= -(\nabla d \cdot \frac{\partial W_F}{\partial \nabla d})^T +\alpha_1 (d^T \cdot A \cdot d) d \otimes d  + \alpha_2 N \otimes d  +\alpha_3 d \otimes N
\\
&~~~~~~~~~~~~~~~~~~~~~~+ (\alpha_4 + \gamma_1) A+\alpha_5 d \otimes  (A \cdot d)  +\alpha_6 (A \cdot d) \otimes d,\\
&-\frac{\partial W_F}{\partial d} + \nabla \cdot (\frac{\partial W_F}{\partial \nabla d})^T
   -(\alpha_3 -\alpha_2 + \gamma_3) N - (\alpha_6 -\alpha_5) A\cdot d + \gamma_2 \nabla \cdot M =0
,
\end{aligned}
\right.
\end{equation}
except for an additional term $(\gamma_2 \nabla \cdot M)$ in angular momentum equation and
two extra terms
as $\gamma_1 A$ and $-\gamma_3  N$, which were absorbed into the classical coefficients as $\alpha_4 + \gamma_1$ and $\alpha_3 -\alpha_2 + \gamma_3$.
The first two terms of angular momentum equation were also known as the molecular field of liquid crystals, whose components read
$-\frac{\delta W_F}{\delta d_i}
=-\frac{\partial W_F}{\partial d_i} + \partial_j (\frac{\partial W_F}{\partial d_{i,j}})$.
The additional term $\partial_j M_{ji} = \partial_j (w_{i,j}+ \Omega_{ki} d_{k,j})$ described how the spatial gradients of director velocity affected the balance of angular momentum, which was not taken into consideration in classical E-L equations of liquid crystals \cite{Chandrasekhar1992Liquid,Stewart2004The}.

\section{Extension to the tensorial theory} \label{Q-tensor}
In the previous section, a generalized vectorial model for non-isothermal flows of nematic liquid crystals was derived, which included the classical E-L model as a special case under the isothermal condition.
In this part, we focused on the Conservation-Dissipation Formalism for the generalized Qian-Sheng (Q-S) model \cite{Qian1998Generalized}, and the Beris-Edwards model \cite{Edwards1990Generalized, Beris1994Thermodynamics} could be treated in a similar way.
From the multi-scale thermodynamic \cite{Grmela2018Generic} point of view, the E-L like vectorial theory belonged to a description of lower level, while the tensorial theory contained more details of the internal structure of liquid crystals, and thus belonged to a higher level.
The CDF provided a unified and elegant framework for describing non-isothermal systems for both the vectorial model and the tensorial model.
To avoid lengthy derivations, here only the key results for the tensorial model were included.
Interested readers might turn to the SI for details.

In the tensorial theory, the conservation laws of mass and momentum for non-isothermal compressible flows of nematic liquid crystals were the same as that for the vectorial model, while the angular momentum and total energy conservation laws were different, which read
\begin{align}
&\frac{\partial}{\partial t}(\rho_1 \dot Q)+ \nabla \cdot(\rho_1 v \otimes \dot Q) = g + \nabla \cdot \pi, \label{Qangmom} \\
&\frac{\partial}{\partial t}(\rho e) + \nabla \cdot(\rho v  e)=\xi \cdot v + \nabla \cdot (\sigma \cdot v + \pi :  \dot Q - q)\label{Qenergy}.
\end{align}
where the second-order tensor $g$ was body torque and the third-order tensor $\pi$ was surface torque.
The total energy included internal energy, translational kinetic energy and rotational kinetic energy as $\rho e = \rho u + \frac{1}{2}\rho v^2 + \frac{1}{2}\rho_1 {\dot Q}^2$.
In analogy with Eq. \eqref{sigma_EL} in the vectorial theory, the stress tensor, $Q$-tensor body torque and surface torque could be separated into
$\sigma= -pI + \sigma_E + \sigma_{QS} + \sigma_{V}$, $g=g_E+g_{QS}+g_V$ and $\pi=\pi_E+\pi_{QS}+\pi_V$, respectively.
%where $\sigma_E, \sigma_{QS}$ and $\sigma_{V}$ denoted the elastic distortion stress, viscous stress induced by microstructure and viscous stress from homogenous fluid flows, respectively.
%Here $\sigma_{QS} + \sigma_{V}$ composed the viscous stress together

Supposing the balance equation of internal energy to satisfy the principle of objectivity,
direct calculations yielded
\begin{align}\label{Qenergy2}
\rho \frac{du}{dt}&=\sigma^T: {\nabla v} - g : \dot Q  + \pi^T \vdots \nabla \dot Q - \nabla \cdot q
=\sigma^T:A - g: Y + \pi^T \vdots M - \nabla \cdot q,
\end{align}
where $\pi^T \vdots \nabla \dot Q=(\pi^T)_{ijk} \partial_k (\dot Q_{ji})$ was the triple scalar product between third-order tensors $(\pi^T)_{ijk}$ and $\partial_k \dot Q_{ji}=\partial \dot Q_{ji} / \partial x_k$.
The second-order tensor $Y=\dot Q - \Omega \cdot Q + Q \cdot \Omega$ was the Jaumann derivative of $Q$, representing the change rate of $Q$ relative to the fluid angular velocity.
In Cartesian coordinates, the third-order tensor $M$ was defined as
\begin{equation}
%Y=\dot Q - \Omega \cdot Q + Q \cdot \Omega,
M_{kij}= \partial_k \dot Q_{ij}  + \partial_k Q_{il} \Omega_{lj} - \Omega_{il} \partial_k Q_{lj},
\end{equation}
which was shown to be objective in Lemma \ref{tensor M anti}.
The terms violated the principle of frame-indifference in Eq. \eqref{Qenergy2} were neglected.
The third-order tensors $\pi$ and $M$ in Eqs. \eqref{Qangmom}-\eqref{Qenergy2}, accounting for the contributions of surface torque to angular momentum and energy, were not found in existing references of tensorial models as far as we know.

In order to find constitutive relations for $(\sigma, \pi, g, q)$, we followed the CDF and specified a strictly concave mathematical entropy function
\begin{equation}\label{entropy_Q}
\eta=\rho s(\nu, u, Q, \nabla Q, \tilde C, \tilde K, \tilde l, \tilde h),
\end{equation}
where $\nu=1/\rho$.
$(\nu, u, Q, \nabla Q)$ denoted the conserved variables, while $(\tilde C, \tilde K, \tilde l, \tilde h)$ were specified such that $\frac{\partial s}{\partial \tilde C}=\sigma_V, \frac{\partial s}{\partial \tilde l}=g_V, \frac{\partial s}{\partial \tilde K}=\pi_V$ and $\frac{\partial s}{\partial \tilde h}=q$.
According to the generalized Gibbs relation and Lemma \ref{tensor_Q_symm}, we derived the time evolution equation of the entropy $\eta$ as
\begin{align*}
& \mathcal {D} s(\nu, u, Q, \nabla Q,\tilde C, \tilde K, \tilde l, \tilde h)=\nabla \cdot J^f + \Sigma^f,
\end{align*}
where $J^f=- \theta ^{-1} q$ and
\begin{align*}
\Sigma^f&= \underline{ (\theta^{-1} \sigma_E^T - \rho \nabla Q : \frac{\partial s}{\partial   \nabla Q} ) : A
- ( \theta^{-1} g_E -  \rho \frac{\partial s}{\partial Q}) : Y  + [ \theta^{-1} \pi^T_E + \rho (\frac{\partial s}{\partial \nabla Q}) ] \vdots M }\\
 & +\underline{ (\frac{\partial s}{\partial \tilde C})^T: \mathcal D \tilde C + \theta^{-1} \sigma_V^T :A }
   +\underline{ (\frac{\partial s}{\partial \tilde l})^T: \mathcal D \tilde l - \theta^{-1} g_V :Y}
   +\underline{ (\frac{\partial s}{\partial \tilde K})^T \vdots \mathcal D \tilde K  + \theta^{-1} \pi^T_V \vdots M }\\
 &
   +\underline{ \frac{\partial s}{\partial \tilde h} \cdot  \mathcal D \tilde h  +  q \cdot \nabla {\theta ^{-1}} }
   +\underline{ \theta^{-1} (\sigma_{QS}^T: A - g_{QS} : Y + \pi^T_{QS} \vdots M) }.
\end{align*}

Following basically the same procedure taken in the vectorial theory, we arrived at a new system of constitutive equations for the non-isothermal tensorial model as
\begin{equation}\label{nonisothermal tensorial model}
\left\{
\begin{aligned}
&\sigma_E =  \rho \theta (\nabla Q : \frac{\partial s}{\partial   \nabla Q})^T , \quad
g_E      = \rho \theta \frac{\partial s}{\partial Q}, \quad
\pi_E = - \rho \theta (\frac{\partial s}{\partial \nabla Q})^T,
\\
&  [({\rho \tilde C})_t  + \nabla \cdot (\rho v \otimes \tilde C)] + \theta^{-1} A = \frac{\sigma_V}{\theta \bar \gamma_1},  \\
&  [({\rho \tilde l})_t       + \nabla \cdot (\rho v \otimes \tilde l)]   - \theta^{-1} Y = \frac{g_V}{\theta \bar \gamma_2},  \\
&  [({\rho \tilde K})_t  + \nabla \cdot (\rho v \otimes \tilde K)] + \theta^{-1} M = \frac{\pi_V}{\theta \bar \gamma_3},  \\
&  [({\rho \tilde h})_t  + \nabla \cdot (\rho v \otimes \tilde h)] + \nabla \theta^{-1} = \frac{q}{\theta^2 \bar \gamma_4}, \\
&\sigma_{QS}=\bar \alpha_1 (Q: A) Q +  \bar \alpha_4 A + \bar \alpha_5 Q\cdot A + \bar \alpha_6 A \cdot Q + \frac{1}{2} \mu_2 Y - \mu_1 Q \cdot Y + \mu_1 (Q \cdot Y)^T \\
 &+ \bar \alpha_7 tr(A) I + \bar \alpha_8 tr(A) Q \cdot Q , \\
&g_{QS}=-\frac{1}{2} \mu_2 A - \mu_1 Y, \quad \pi_{QS}=0,
\end{aligned}
\right.
\end{equation}
where $\bar \alpha_1$, $\bar \alpha_4$, $\bar \alpha_5$, $\bar \alpha_6$, $\mu_1$, $\mu_2$, $\bar \alpha_7$ and $\bar \alpha_8$ were material and temperature-dependent viscosity coefficients of liquid crystals, and $\bar \alpha_6 -\bar \alpha_5=\mu_2$. We adopted $\bar \alpha_7$ and $\bar \alpha_8$ to describe the compressible effects on the stress tensor, which vanished if the system was incompressible.
In that case, the expression of $\sigma_{QS}$ reduced to the classical constitutive equations in ref. \cite{Qian1998Generalized}.

Now, the total entropy production rate of system \eqref{nonisothermal tensorial model} was readily obtained as
\begin{align*}
\Sigma^f
=&\frac{1}{\theta \bar {\gamma_1}} |\sigma_V|^2 +  \frac{1}{\theta \bar {\gamma_2}} |g_V|^2 +\frac{1}{\theta \bar {\gamma_3}} |\pi_V|^2 +\frac{1}{\theta^2 \bar {\gamma_1}} |q|^2
+\underline
{ \bar \alpha_7 [tr(A)]^2 + \bar \alpha_8 tr(A)Q : (Q \cdot A)
} \\
&+\underline{
\bar \alpha_1 (Q \cdot A)^2 + \bar \alpha_4 |A|^2 + (\bar \alpha_5 +\bar \alpha_6)(Q \cdot A):A + \mu_2 Y:A + \mu_1 |Y|^2
},
\end{align*}
where the first four terms $(epr1+ epr2 + epr3 + epr4) \geq 0$ were produced by relaxation processes.
The underlined residual terms, denoted as $epr5$, corresponded to the classical Qian-Sheng theory.
In order to guarantee the non-negativeness of $epr5$, the phenomenological coefficients $\bar \alpha's$ and $\mu's$ should satisfy further restrictions,
similar as in the Theorem \ref{2nd law} of vectorial model.
Considering the complexity of tensor calculations, we presented a non-trivial sufficient condition for $epr5 \geq 0$ as
\begin{equation} \label{epr5 geq 0}
\begin{split}
&
\bar \alpha_8=0, \quad \bar \alpha_7 \geq 0,
\quad \bar \alpha_4 \geq 0,
\quad \mu_1 \geq 0,
\quad \bar \alpha_1 \geq 0, \\
&
\quad 4 \bar \alpha_4 \mu_1 - {\mu_2^2} \geq 0,
\quad 4 \bar \alpha_4 \bar \alpha_1   - ({\bar \alpha_5 +\bar \alpha_6})^2 \geq 0, \\
&\quad 4 \bar \alpha_1 \bar \alpha_4 \mu_1 - \bar \alpha_1 \mu_2^2 - \mu_1({\bar \alpha_5 +\bar \alpha_6})^2 \geq 0.
\end{split}
\end{equation}
%\begin{equation} \label{epr5 geq 0}
%\begin{split}
%&\bar \alpha_8=0, \quad \bar \alpha_7 \geq 0,
%\quad \bar \alpha_4 \geq 0,
%\quad 4 \bar \alpha_4 \mu_1 - {\mu_2^2} \geq 0,
%\quad 4 \bar \alpha_1 \alpha_4 \mu_1 - \bar \alpha_1 \mu_2^2 - \mu_1({\bar \alpha_5 +\bar \alpha_6})^2 \geq 0. \\
%& 1
%\end{split}
%\end{equation}
Please see Lemma \ref{proof of the epr5 geq 0} for a proof of the sufficiency.

Especially, under the isothermal condition, the temperature became constant $\theta=T$.
Further setting the density as $\rho=1$, then the classical Landau-de Gennes energy density was expressed as \cite{deGennes1995The}
\begin{align}
F_{LdG}(Q, \nabla Q)=& -\frac{a}{2} tr(Q^2) -\frac{b}{3} tr(Q^3)  + \frac{c}{4}tr(Q^4)  \nonumber \\
&+ \frac{1}{2}(L_1 |\nabla Q|^2 + L_2 Q_{ij,j}Q_{ik,k} +L_3 Q_{ij,k}Q_{ik,j} +L_4 Q_{ij}Q_{kl,i}Q_{kl,j}),
\end{align}
where the bulk energy included the phenomenological coefficients $a,b,c>0$, and the elastic energy $L_i$ ($i=1,2,3,4$).
By choosing the entropy function as $s=-F_{LdG}(Q, \nabla Q)/T$, the balance equations of momentum and angular momentum reduced to
\begin{align}
\rho \frac{dv}{dt}
=&\xi - \nabla p - \nabla \cdot (\nabla Q : \frac{\partial F_{LdG}}{\partial   \nabla Q})^T + \bar \alpha_1 (Q: A) Q +  (\bar \alpha_4 + \bar \gamma_1) A +  \bar \alpha_5 Q\cdot A \\
& + \bar \alpha_6 A \cdot Q + \frac{1}{2} \mu_2 Y - \mu_1 Q \cdot Y + \mu_1 (Q \cdot Y)^T, \\
\rho_1 \ddot Q
=&-\frac{\partial F_{LdG}}{\partial Q} + \nabla \cdot (\frac{\partial F_{LdG}}{\partial \nabla Q} )^T -\frac{1}{2} \mu_2 A - (\mu_1-\bar \gamma_2) Y  + \bar \gamma_3 \nabla \cdot M.
\end{align}
The above equations were essentially the same as the Qian-Sheng model \cite{Qian1998Generalized}, except that we had two extra terms $\bar \gamma_1 A, -\gamma_2 Y$, and included $\nabla \cdot M$.
The last term $\partial_k M_{kij}= \partial_k( \partial_k \dot Q_{ij} - \Omega_{il} Q_{lj,k} + Q_{il,k} \Omega_{lj})
$ was frame-indifference, and described the effect of spatial anisotropy of tensorial order parameter on the balance of angular momentum.

\begin{rem}
Based on the analysis of length and time scales \cite{Qian1998Generalized}, the E-L equations reflected the long-range and slow-motion limit of the $Q$-tensor model.
By assuming a uniaxial symmetry of $Q$-tensor,
\begin{equation}
Q_{ij}(x,t)=\frac{S_0}{2} [3d_i(x,t) d_j(x,t) - \delta_{ij}],
\end{equation}
it was shown that the classical isothermal E-L equations could be recovered from the Qian-Sheng model \cite{Qian1998Generalized, Han2015From}.
A similar conclusion was expected for our generalized non-isothermal tensorial and vectorial models, since both of them could be casted into the framework of CDF.
\end{rem}

\section{Conclusion}
In the first paper of this series \cite{Peng2018Conservation}, isothermal models for particle diffusion in dilute solutions, polymer phase separation dynamics and simplified Lin-Liu model for isothermal flows of nematic liquid crystals were studied under the guidance of Conservation-Dissipation Formalism. In the current paper, more complicated non-isothermal and compressible situations were examined by taking nematic liquid crystals as a typical example.

Starting from the fundamental conservation laws, a generalized vectorial model for flows of nematic liquid crystal in non-isothermal environment was derived within the framework of CDF. The hydrodynamic equations thus deduced were fully consistent with the first and second laws of thermodynamics. A necessary and sufficient condition involving inequalities of Leslie coefficients was obtained to guarantee the non-negativeness of entropy production rate, with emphasis on different molecular lengths. Under the isothermal and incompressible condition, the classical E-L theory was shown to be a special case of our new vectorial model in the stationary limit. Moreover, all results were readily extended from a vectorial form to the tensorial theory under non-isothermal and compressible conditions by making use of CDF.

It was widely recognized that the non-isothermal situations would bring some intrinsic difficulties to the thermodynamics-based mathematical modeling, like proper definitions of the non-equilibrium temperature, the non-violation of first law of thermodynamics, the non-Fourier's law for heat conduction, etc.
Many non-equilibrium thermodynamic theories or approaches, like the CIT \cite{de2013non}, the Doi's variational approach \cite{doi2013soft}, the energetic variational framework \cite{Sun2008On} and so on, might fail to overcome these obstacles.
In contrast, as one of its outstanding merits, CDF was shown to be able to handle various non-isothermal irreversible processes easily in a systematic way.
The first and second laws of thermodynamics were respected in CDF simultaneously.
The non-equilibrium temperature and pressure were introduced in a self-consistent way as the partial derivatives of the entropy.
Relaxation-type equations for the stress tensor and heat conduction were constructed as direct generalizations of Newton's law for viscosity and Fourier's law for heat conduction.
All these facts demonstrated that CDF would have a broader scope of applications in soft matter physics.

\section*{acknowledgment}
This work was supported by the 13th 5-Year Basic Research Program of CNPC (2018A-3306), the National Natural Science
Foundation of China (Grants 21877070) and Tsinghua University Initiative Scientific Research Program (Grants 20151080424).

\section{Appendix A: Some Lemmas}

\begin{lemma} \label{lemma of Dd}
(A Proof of Eq. \eqref{Dd})
The director vector $d$ and its spatial gradient $\nabla d$ satisfied the following differential relations:
\begin{equation*}
\mathcal D d= \rho(N+\Omega \cdot d), \quad
\mathcal D (\nabla d) = \rho[ M -(\nabla d) \cdot \Omega + \Omega \cdot(\nabla d)-A \cdot (\nabla d)].
\end{equation*}
%\begin{equation*}
%\begin{split}
%% \nonumber to remove numbering (before each equation)
%&\mathcal D d=\rho \frac{d}{dt}d=\rho(N+\Omega \cdot d), \\
%&\mathcal D (\nabla d)=\rho \frac{d}{dt} (\nabla d) = \rho \nabla (\frac{d}{d t}d) - \rho \nabla v \cdot \nabla d = \rho[M^T-(\nabla d)\cdot \Omega + \Omega \cdot (\nabla d)-A\cdot \nabla d].
%\end{split}
%\end{equation*}
\end{lemma}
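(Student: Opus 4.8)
The plan is to reduce both identities to the single fact, already recorded above, that the continuity equation \eqref{mass} makes $\mathcal D$ coincide with the density-weighted material derivative: $\mathcal D f = \rho\, df/dt$ for any scalar, vector or tensor field $f$, with $d/dt = \partial_t + v\cdot\nabla$.

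The first identity is then immediate. By the definition of the director velocity one has $w = dd/dt$, and by the definition $N = w - \Omega\cdot d$ from the Proposition, so $\mathcal D d = \rho\, dd/dt = \rho w = \rho(N + \Omega\cdot d)$, which is the claim.

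For the second identity I would first establish, in Cartesian components, the commutator between the material derivative and a spatial partial derivative. Expanding $\partial_t$ and $v_k\partial_k$ and using that mixed partials commute (so $d$ and $v$ are tacitly assumed smooth enough) gives
\begin{equation*}
\frac{d}{dt}\bigl(\partial_i d_j\bigr) = \partial_i\!\left(\frac{d}{dt}d_j\right) - (\partial_i v_k)(\partial_k d_j) = w_{j,i} - v_{k,i}\,d_{j,k}.
\end{equation*}
Then I would substitute the two standard splittings $v_{k,i} = A_{ik} + \Omega_{ki}$ (from $2A_{ik} = v_{i,k}+v_{k,i}$, $2\Omega_{ki} = v_{k,i}-v_{i,k}$, using $A^{T}=A$) and $w_{j,i} = M_{ij} - \Omega_{kj}d_{k,i}$ (from the definition $M_{ij} = w_{j,i}+\Omega_{kj}d_{k,i}$), and invoke $\Omega^{T}=-\Omega$ once, to obtain
\begin{equation*}
\frac{d}{dt}\bigl(d_{j,i}\bigr) = M_{ij} - \Omega_{kj}d_{k,i} + \Omega_{ik}d_{j,k} - A_{ik}d_{j,k}.
\end{equation*}
The last step is to read off the right-hand side, with the convention $(\nabla d)_{ij} = d_{j,i}$ (consistent with $(\nabla v)_{ij}=v_{j,i}$ used earlier), as the $(i,j)$ entry of $M - (\nabla d)\cdot\Omega + \Omega\cdot(\nabla d) - A\cdot(\nabla d)$; multiplying by $\rho$ converts $\rho\, d(\nabla d)/dt$ into $\mathcal D(\nabla d)$ and yields the stated formula.

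I do not expect a genuine conceptual obstacle: the argument is the commutator identity plus bookkeeping. The only place where care is truly required is matching index positions in the matrix products — keeping $(\nabla d)\cdot\Omega$ distinct from $\Omega\cdot(\nabla d)$, adhering consistently to $(\nabla d)_{ij}=d_{j,i}$, and using the symmetry of $A$ and the antisymmetry of $\Omega$ at exactly the right moments; a sign slip in any of these would change the final expression.
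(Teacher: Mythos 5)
Your proposal is correct and follows essentially the same route as the paper: reduce $\mathcal D$ to $\rho\,d/dt$ via the continuity equation, use $w=dd/dt$ and $N=w-\Omega\cdot d$ for the first identity, and use the commutator $\frac{d}{dt}(\nabla d)=\nabla w-\nabla v\cdot\nabla d$ together with the splittings of $\nabla v$ and the definition of $M$ for the second. The only difference is that you carry out the bookkeeping in Cartesian components while the paper writes it in compact tensor notation; the content is identical and your index work checks out.
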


\begin{proof}
Firstly, we had
\begin{align*}
\mathcal D d
&=\frac{\partial }{\partial t}(\rho d) + \nabla \cdot (v \rho d )
=\rho ( \frac{\partial}{\partial t}d + v \cdot \nabla d)
=\rho \frac{d}{dt}d =\rho w\\
&
=\rho(N+\Omega \cdot d),
\end{align*}
by recalling the definition of  relative angular velocity $N=w - \Omega \cdot d$.
Secondly, we deduced that
\begin{align*}
\mathcal D (\nabla d)
=&\rho (\frac{\partial }{\partial t} + v \cdot \nabla )(\nabla d)
= \rho \nabla[ (\frac{\partial }{\partial t} + v \cdot \nabla) d] - \rho \nabla v \cdot \nabla d  \\
=& \rho \nabla w - \rho \nabla v \cdot \nabla d
= \rho[ M -(\nabla d) \cdot \Omega + \Omega \cdot(\nabla d)-A \cdot (\nabla d)],
\end{align*}
where the definition $M=\nabla w + (\nabla d) \cdot \Omega$ was used in the last step.
\end{proof}

The following identity was obtained by Ericksen \cite{Ericksen1961Conservation} in 1961, we derived it here to preserve the integrity of contents.
\begin{lemma}\label{tensor symmetric}
If $s=s(\nu, u, d, \nabla d, C, K, l,h)$ was objective with respect to conserved variables,
%unaffected by rigid rotations,
the tensor $(\frac{\partial s}{\partial d}\otimes d +  \frac{\partial s}{\partial \nabla d} \cdot \nabla d - \nabla d \cdot  \frac{\partial s}{\partial \nabla d}  )$ must be symmetric \cite{Ericksen1961Conservation}.
\end{lemma}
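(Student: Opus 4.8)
The statement is Ericksen's identity, and the natural route is to exploit the \emph{infinitesimal} form of objectivity rather than to wrestle with explicit derivatives of $s$. Objectivity of $s$ with respect to the conserved variables means that, with $\nu,u,C,K,l,h$ held fixed as passive parameters, $s$ is invariant under every proper rotation $R\in\mathrm{SO}(3)$ acting simultaneously on $d$ and on $\nabla d$; since $d$ transforms as an objective vector, $d\mapsto Rd$, and its spatial gradient picks up a rotation on each of its two indices, $\nabla d\mapsto R(\nabla d)R^{T}$, so the invariance reads
\begin{equation*}
s\bigl(\nu,u,Rd,R(\nabla d)R^{T},C,K,l,h\bigr)=s(\nu,u,d,\nabla d,C,K,l,h)\qquad\text{for every }R\in\mathrm{SO}(3).
\end{equation*}

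The plan is to differentiate this identity along the one-parameter subgroup $R(\varepsilon)=\exp(\varepsilon W)$ generated by an arbitrary skew-symmetric matrix $W$, $W^{T}=-W$. To first order $Rd=d+\varepsilon\,W\cdot d+O(\varepsilon^{2})$ and $R(\nabla d)R^{T}=\nabla d+\varepsilon\,(W\cdot\nabla d-\nabla d\cdot W)+O(\varepsilon^{2})$, so the chain rule evaluated at $\varepsilon=0$ gives
\begin{equation*}
0=\frac{\partial s}{\partial d}\cdot(W\cdot d)+\frac{\partial s}{\partial\nabla d}:\bigl(W\cdot\nabla d-\nabla d\cdot W\bigr).
\end{equation*}
I would then regroup the right-hand side, pulling $W$ out as a common factor by relabelling dummy indices and using $W^{T}=-W$, so as to cast it in the form $W:\widetilde T=0$, where $\widetilde T$ coincides --- up to the transpose conventions carried by the symbol $\frac{\partial s}{\partial\nabla d}$ and by the contractions $\cdot$ and $:$ --- with $\frac{\partial s}{\partial d}\otimes d+\frac{\partial s}{\partial\nabla d}\cdot\nabla d-\nabla d\cdot\frac{\partial s}{\partial\nabla d}$. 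Because $W$ ranges over \emph{all} skew-symmetric matrices, its pairing sees only the skew-symmetric part of $\widetilde T$, which is therefore forced to vanish; hence $\widetilde T$, and with it the tensor in the statement, is symmetric. An equivalent but longer route uses the representation theorem for isotropic scalars: $s$ depends on $(d,\nabla d)$ only through their joint $\mathrm{SO}(3)$-invariants --- $|d|^{2}$, $d\cdot(\nabla d)\cdot d$, $\operatorname{tr}\nabla d$, $\operatorname{tr}(\nabla d)^{2}$, and so on --- and one checks the required combination term by term after applying the chain rule through these invariants.

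The one genuinely delicate point is the index bookkeeping: one must keep straight which slot of $\nabla d$ carries the spatial derivative (the paper's convention $(\nabla d)_{ij}=d_{j,i}$), the corresponding transpose built into the symbol $\frac{\partial s}{\partial\nabla d}$, and the conventions for $\cdot$ and $:$ (with $A:B=A_{ij}B_{ji}$), so that what "arbitrary skew $W$'' actually forces to be symmetric is precisely $\frac{\partial s}{\partial d}\otimes d+\frac{\partial s}{\partial\nabla d}\cdot\nabla d-\nabla d\cdot\frac{\partial s}{\partial\nabla d}$ and not some transpose of it. Once these conventions are pinned down the rearrangements are routine, and no property of $s$ beyond its isotropy in $(d,\nabla d)$ is used.
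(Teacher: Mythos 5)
Your argument is correct and is essentially the paper's own proof: the paper likewise imposes invariance of $s$ under $d\mapsto\mathcal{R}d$, $\nabla d\mapsto\mathcal{R}(\nabla d)\mathcal{R}^{T}$ for an infinitesimal rotation $\mathcal{R}_{ij}=\delta_{ij}+\varepsilon_{ij}$ with $\varepsilon$ skew (your $\exp(\varepsilon W)$ to first order), expands by the chain rule, and concludes from the arbitrariness of the skew generator that the tensor is symmetric. The regrouping you defer is exactly the one-line index relabelling the paper carries out, so nothing essential is missing.
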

\begin{proof}
We considered Euclidean transformation
\begin{equation}\label{rigid rotation}
x^*(t)= \mathcal R(t) x(t) + c(t),
\end{equation}
of a continuum,
here $\mathcal R=(\mathcal{R}_{ij})$ stood for a second-order orthogonal tensor function, $i.e.$, $\mathcal R^T \mathcal R=I$.
States after rigid motions and original states were denoted by asterisked and unasterisked variables respectively.
Mathematically, we had
\begin{equation*}
s(\nu, u, d^*, (\nabla d)^*, C, K, l,h)=s(\nu, u, d, \nabla d, C, K, l,h),
\end{equation*}
where
\begin{equation}\label{rigid}
d^*=\mathcal R \cdot d, \quad (\nabla d)^*=\mathcal R \cdot (\nabla d) \cdot {\mathcal R}^T.
\end{equation}
Specifically, we chose arbitrary infinitesimal rotations with
\begin{equation}
\mathcal R_{ij}= \delta_{ij}+\varepsilon_{ij}, \quad
\varepsilon_{ji}=-\varepsilon_{ij}, \quad
\varepsilon_{ij}\varepsilon_{kl} \approx 0.
\end{equation}
Substituting above relations into Eq. \eqref{rigid}, we had
\begin{equation*}
d^*-d=\varepsilon \cdot d, \quad
(\nabla d)^* -\nabla d
=\varepsilon \cdot \nabla d - \nabla d \cdot \varepsilon - \varepsilon \cdot \nabla d \cdot \varepsilon
= \varepsilon \cdot \nabla d - \nabla d \cdot \varepsilon + \mathcal{O} (\varepsilon^2),
\end{equation*}
where the second equality held within the leading order of $\varepsilon$.
Therefore, one deduced approximately that
\begin{equation*}
0\approx
\frac{\partial s}{\partial d} \cdot (d^* -d)
+ (\frac{\partial s}{\partial \nabla d})^T : [(\nabla d)^* - \nabla d]
= - (\frac{\partial s}{\partial d}\otimes d +  \frac{\partial s}{\partial \nabla d} \cdot \nabla d - \nabla d \cdot  \frac{\partial s}{\partial \nabla d}  ): \varepsilon .
\end{equation*}
Since the above equality held for arbitrary skew-symmetric tensors $\varepsilon$, the tensor in the parenthesis must be symmetric.
This completed the proof.
\end{proof}

\begin{lemma}\label{proof of the 2nd law}
(A Proof of the Theorem \ref{2nd law})
\end{lemma}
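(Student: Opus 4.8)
The plan is to reduce the stated condition to the non-negativity of a quadratic form in a small number of scalar variables, and then apply the standard criterion (Sylvester-type leading-principal-minor conditions, or equivalently successive completion of squares) for positive semi-definiteness. Recall from the calculation just before the theorem that
\[
\theta\cdot epr5 = \alpha_1 (d^T\!\cdot A\cdot d)^2 + (\alpha_2+\alpha_3-\alpha_5+\alpha_6)(d^T\!\cdot A\cdot N) + \alpha_4|A|^2 + (\alpha_5+\alpha_6)|A\cdot d|^2 - (\alpha_2-\alpha_3)|N|^2 .
\]
First I would fix, without loss of generality, a coordinate frame in which $d=(d_1,0,0)$ with $d_1=|d|\ge 0$; since $s$ is isotropic and the expression is frame-indifferent, this loses nothing. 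In this frame $d^T\!\cdot A\cdot d = d_1^2 A_{11}$, $|A\cdot d|^2 = d_1^2(A_{11}^2+A_{12}^2+A_{13}^2)$, and $d^T\!\cdot A\cdot N = d_1(A_{11}N_1 + A_{12}N_2 + A_{13}N_3)$. Substituting, $\theta\cdot epr5$ splits into a sum over three mutually decoupled groups of variables: (i) the block involving $(A_{11},N_1)$ together with the remaining diagonal/off-diagonal components $A_{22},A_{33},A_{23}$ that appear only through $\alpha_4|A|^2$; (ii) the block $(A_{12},N_2)$; and (iii) the block $(A_{13},N_3)$. Blocks (ii) and (iii) are identical by symmetry.

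Next I would analyze each block. The "pure $A$" leftovers $A_{22}^2+A_{33}^2+2A_{23}^2$ carry coefficient $\alpha_4$, forcing $\alpha_4\ge 0$ as a first necessary condition. Block (ii) contributes $\alpha_4(A_{12}^2) + (\alpha_5+\alpha_6)d_1^2 A_{12}^2 + (\alpha_2+\alpha_3-\alpha_5+\alpha_6)d_1 A_{12}N_2 - (\alpha_2-\alpha_3)N_2^2$ — wait, more carefully one collects the $2\alpha_4 A_{12}^2$ from $|A|^2$ — this is a binary quadratic form in $(A_{12},N_2)$ whose matrix is
\[
\begin{pmatrix} 2\alpha_4 + (\alpha_5+\alpha_6)d_1^2 & \tfrac12(\alpha_2+\alpha_3-\alpha_5+\alpha_6)d_1 \\[2pt] \tfrac12(\alpha_2+\alpha_3-\alpha_5+\alpha_6)d_1 & -(\alpha_2-\alpha_3) \end{pmatrix},
\]
whose positive semi-definiteness gives $-(\alpha_2-\alpha_3)\ge 0$ (i.e. $\alpha_3\ge\alpha_2$) together with the determinant inequality, which is exactly the fifth listed inequality. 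Block (i) is a ternary form in $(A_{11},N_1,\,t)$ where $t$ ranges over the $A_{22},A_{33},A_{23}$ directions contributing $\alpha_4 t^2$; diagonalizing away $t$, what remains is the binary form in $(A_{11},N_1)$ with matrix entries $2\alpha_4+(\alpha_1+\alpha_5+\alpha_6)d_1^2$, $-(\alpha_2-\alpha_3)$, and off-diagonal $\tfrac12(\alpha_2+\alpha_3-\alpha_5+\alpha_6)d_1$, whose semi-definiteness yields the second and fourth listed inequalities. Finally, since all these block matrices must be PSD simultaneously and the off-diagonal terms are common, I would verify that the full list in \eqref{ineq of 2nd law} is both necessary (take test values of $A,N$ activating one block at a time) and sufficient (a PSD block-diagonal matrix with PSD blocks is PSD), and I would run the argument for every $d_1\in[d_{min},d_{max}]$ as stated.

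The main obstacle I anticipate is purely bookkeeping rather than conceptual: one must be scrupulous about the factors of $2$ coming from $|A|^2=\sum_{ij}A_{ij}^2$ versus the off-diagonal entries $A_{12}=A_{21}$, and about how the cross term $d^T\!\cdot A\cdot N$ distributes among the blocks, so that the four two-by-two (and one three-by-three) determinant conditions line up verbatim with the five inequalities in the statement. A secondary subtlety is the "only if" direction when $d_1=0$: at $d_1=0$ only $\alpha_4\ge 0$ and $\alpha_3\ge\alpha_2$ survive, so one must check that the remaining inequalities in \eqref{ineq of 2nd law} are automatically implied at $d_1=0$ (they reduce to $4(\alpha_3-\alpha_2)\alpha_4\ge 0$, etc.) and only become genuinely restrictive for $d_1>0$; handling the whole interval $[d_{min},d_{max}]$ then just amounts to requiring the determinant inequalities, which are linear in $d_1^2$, to hold at the endpoints.
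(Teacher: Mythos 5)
Your proposal follows essentially the same route as the paper's own proof: fix an orthonormal frame with $d=d_1e_1$, expand $\theta\cdot epr5$ into mutually decoupled quadratic blocks plus the pure-$A$ leftovers $A_{22},A_{33},A_{23}$, and read off positive semi-definiteness of the small blocks. The only structural difference is cosmetic: the paper additionally rotates the frame so that $N=N_1e_1+N_2e_2$ (hence only two coupled blocks appear), whereas you keep $N_3$ and observe that the $(A_{13},N_3)$ block duplicates the $(A_{12},N_2)$ block; these are equivalent. One concrete correction, and it is exactly the factor-of-two trap you flagged yourself: in $\alpha_4|A|^2=\alpha_4\sum_{ij}A_{ij}^2$ the diagonal entry $A_{11}$ appears only once, so the $(1,1)$ entry of your block (i) should be $(\alpha_1+\alpha_5+\alpha_6)d_1^2+\alpha_4$, not $2\alpha_4+(\alpha_1+\alpha_5+\alpha_6)d_1^2$; only the off-diagonal components $A_{12},A_{13},A_{23}$ pick up the factor $2$. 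With that fixed, your determinant conditions coincide with the second and fourth inequalities of \eqref{ineq of 2nd law} and your block-(ii) conditions with the third and fifth, exactly as in the paper's $4\times4$ matrix $\mathcal M$. A final small remark: your derivation also produces the diagonal condition $\alpha_3\geq\alpha_2$ for each $2\times2$ block, which the paper's list omits (it is implied by the determinant inequalities only when the other diagonal entry is strictly positive); being explicit about it, and about the degenerate case $d_1=0$ as you are, if anything tightens the statement rather than weakening your argument.
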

\begin{proof}
Choose a set of proper orthogonal basis $(e_1, e_2, e_3)$, such that
\begin{equation*}
d=d_1  e_1, \quad N= N_1 e_1 + N_2 e_2, \quad A=A_{ij}e_i \otimes e_j,
%A=\sum_{i,j=1}^{3}A_{ij}e_i \otimes e_j,
\end{equation*}
where $d_1=|d| \geq 0$ was the norm of director vector $d(x,t)$,
%, representing the average molecular length in point $x$ at time $t$,
and the repeated indices were subject to the summation convention.
Moreover, two vectors $d$ and $N$ were in a subspace spanned by $(e_1, e_2)$.
%We defined
%\begin{equation*}
%\dot A = \frac{1}{3} tr(A) I, \quad {\AA}= A - \frac{1}{3} tr(A) I,
%\end{equation*}
%for convenience, where $\dot A$ and ${\AA}$ were orthogonal in the sense of double contraction.
Inserting the relations of $d$, $N$ and $A$ into $epr5$, we had
\begin{align*}
&\theta \cdot epr5\\
=&\alpha_{1} d_1^2 A_{11}^2+(\alpha_2 + \alpha_3 - \alpha_5 + \alpha_6)d_1(A_{11} N_1 + A_{12}N_2)
+\alpha_4 A_{ij}A_{ij} \\
&+(\alpha_5+\alpha_6)d_1^2(A_{11}^2+A_{12}^2+A_{13}^2) + (\alpha_3-\alpha_2)(N_1^2+N_2^2)\\
=&\alpha_{1} d_1^2 A_{11}^2+(\alpha_2 + \alpha_3 - \alpha_5 + \alpha_6)d_1(A_{11} N_1 + A_{12}N_2)
+\alpha_4 (A_{11}^2+A_{22}^2+A_{33}^2) +2\alpha_4 (A_{12}^2+ A_{13}^2+A_{23}^2) \\
&+(\alpha_5+\alpha_6)d_1^2(A_{11}^2+A_{12}^2+A_{13}^2) + (\alpha_3-\alpha_2)(N_1^2+N_2^2)\\
=&[(\alpha_1+\alpha_5+\alpha_6)d_1^2 +\alpha_4]A_{11}^2
+(\alpha_2 + \alpha_3 - \alpha_5 + \alpha_6) d_1 A_{11} N_1 + (\alpha_3 - \alpha_2) N_1^2\\
&+[(\alpha_5 + \alpha_6)d_1^2 +2 \alpha_4]A_{12}^2
+(\alpha_2 + \alpha_3 - \alpha_5 + \alpha_6)d_1 A_{12} N_2 + (\alpha_3 - \alpha_2) N_2^2\\
&+(\alpha_5  + \alpha_6)A_{13}^2
+ \alpha_4 A_{22}^2
+ 2 \alpha_4 A_{23}^2
+ \alpha_4 A_{33}^2 \geq 0,
\end{align*}
where we used the symmetric property of $A$ in the second step.
Since $A_{ij}(i\leq j), N_1, N_2$ and $d_1$ were all independent variables, it required that
\begin{align*}
&[(\alpha_1+\alpha_5+\alpha_6)d_1^2 +\alpha_4]A_{11}^2
+(\alpha_2 + \alpha_3 - \alpha_5 + \alpha_6) d_1 A_{11} N_1 + (\alpha_3 - \alpha_2) N_1^2\\
&+[(\alpha_5 + \alpha_6)d_1^2 +2 \alpha_4]A_{12}^2
+(\alpha_2 + \alpha_3 - \alpha_5 + \alpha_6)d_1 A_{12} N_2 + (\alpha_3 - \alpha_2) N_2^2 \geq 0,\\
& (\alpha_5  + \alpha_6)A_{13}^2 \geq 0, \quad  \alpha_4 A_{22}^2 \geq 0, \quad  2 \alpha_4 A_{23}^2 \geq 0, \quad  \alpha_4 A_{33}^2 \geq 0 .
\end{align*}
The last four inequalities yielded that
\begin{equation} \label{ineq of alpha_4}
\alpha_5  + \alpha_6 \geq 0, \quad \alpha_4 \geq 0.
\end{equation}
Moreover, the left-hand side of the first inequality corresponded to a quadratic form of variables $(A_{11}, A_{12}, N_1, N_2)$, where the associated symmetric matrix read
\begin{align*}
\mathcal {M}
=
\begin{pmatrix}
(\alpha_1+\alpha_5+\alpha_6)d_1^2 +\alpha_4 & 0 & \frac{(\alpha_2 + \alpha_3 - \alpha_5 + \alpha_6)d_1}{2}  & 0
\\  & (\alpha_5 + \alpha_6)d_1^2 +2 \alpha_4 & 0 & \frac{(\alpha_2 + \alpha_3 - \alpha_5 + \alpha_6) d_1}{2} \\
* & & \alpha_3 -\alpha_2 & 0\\
& & & \alpha_3 -\alpha_2
\end{pmatrix}.
\end{align*}
The matrix $\mathcal {M}$ was semi-positive if and only if all leading principal minors had non-negative determinants.
Therefore,
\begin{equation}\label{ineq of alpha_1}
\begin{aligned}
&(\alpha_1+\alpha_5+\alpha_6)d_1^2 +\alpha_4 \geq 0,\\
&(\alpha_5 + \alpha_6)d_1^2 +2 \alpha_4      \geq 0, \\
& 4(\alpha_3 -\alpha_2)(\alpha_1+\alpha_5+\alpha_6)d_1^2 + 4(\alpha_3 -\alpha_2) \alpha_4
- (\alpha_2 + \alpha_3 - \alpha_5 + \alpha_6)^2 d_1^2 \geq 0, \\
& 4(\alpha_3 -\alpha_2)[(\alpha_5+\alpha_6)d_1^2 + 2 \alpha_4] - (\alpha_2 + \alpha_3 - \alpha_5 + \alpha_6)^2 d_1^2 \geq 0,
\end{aligned}
\end{equation}
where the above inequalities involving $\alpha_1, \cdots, \alpha_6$ held for any $d_1 \in [d_{min}, d_{max}]$, and $[d_{min}, d_{max}]$ stood for the length variation range of liquid crystal molecules when the temperature was changed.
A combination of relations \eqref{ineq of alpha_4} and \eqref{ineq of alpha_1} yielded that
\begin{equation}
\begin{aligned}
& \alpha_4 \geq 0, \quad \alpha_5 + \alpha_6\geq 0, \\
&(\alpha_1+\alpha_5+\alpha_6)d_1^2 +\alpha_4 \geq 0,\\
& 4(\alpha_3 -\alpha_2)(\alpha_1+\alpha_5+\alpha_6)d_1^2 + 4(\alpha_3 -\alpha_2) \alpha_4
- (\alpha_2 + \alpha_3 - \alpha_5 + \alpha_6)^2 d_1^2 \geq 0, \\
& 4(\alpha_3 -\alpha_2)[(\alpha_5+\alpha_6)d_1^2 + 2 \alpha_4] - (\alpha_2 + \alpha_3 - \alpha_5 + \alpha_6)^2 d_1^2 \geq 0.
\end{aligned}
\end{equation}
\end{proof}

\begin{lemma}\label{tensor M anti}
The third-order tensor $M$, corresponding to $\nabla \dot Q$, defined as
\begin{equation*}
M_{kij}= \partial_k \dot Q_{ij} - \Omega_{il} Q_{lj,k} + Q_{il,k} \Omega_{lj},
\end{equation*}
was objective.
\end{lemma}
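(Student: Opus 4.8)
The plan is to verify directly that $M$ transforms as a genuine third-order Cartesian tensor under a time-dependent Euclidean change of observer $x^{*}(t)=\mathcal R(t)x(t)+c(t)$, $\mathcal R^{T}\mathcal R=I$; that is, $M^{*}_{kij}=\mathcal R_{km}\mathcal R_{ip}\mathcal R_{jq}M_{mpq}$, which is what objectivity means here. First I would assemble the transformation rules. The spatial gradient rotates, $\partial^{*}_{k}=\mathcal R_{km}\partial_{m}$; the order parameter transforms as $Q^{*}=\mathcal R Q\mathcal R^{T}$, so $\nabla Q$ is already a proper third-order tensor; the spin tensor is \emph{not} objective, $\Omega^{*}=\mathcal R\Omega\mathcal R^{T}+W$ with $W:=\dot{\mathcal R}\mathcal R^{T}$ skew; and the material derivative of $Q$ acquires a commutator, $\dot Q^{*}=\mathcal R\dot Q\mathcal R^{T}+WQ^{*}-Q^{*}W$, which one gets by differentiating $Q^{*}=\mathcal R Q\mathcal R^{T}$ along a particle path and using $\dot{\mathcal R}=W\mathcal R$. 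Throughout, the key fact is that $\mathcal R$, $\dot{\mathcal R}$ and $W$ depend on $t$ alone and are therefore annihilated by $\partial_{m}$.

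Then I would substitute these rules into $M^{*}_{kij}=\partial^{*}_{k}\dot Q^{*}_{ij}-\Omega^{*}_{il}\,\partial^{*}_{k}Q^{*}_{lj}+(\partial^{*}_{k}Q^{*}_{il})\,\Omega^{*}_{lj}$ and split each of the three terms into its ``objective'' part (built only from $\mathcal R$'s) and its ``extra'' part (carrying $W$ or $\dot{\mathcal R}$). Because $W$ is spatially uniform, applying $\partial^{*}_{k}$ to $WQ^{*}-Q^{*}W$ simply reproduces that commutator acting on $\nabla Q^{*}$; writing $\hat W:=\mathcal R^{T}\dot{\mathcal R}$ (also skew) and contracting, the extra part of the first term becomes $\mathcal R_{km}\mathcal R_{ip}\mathcal R_{jq}\big(\hat W_{pa}Q_{aq,m}+\hat W_{qb}Q_{pb,m}\big)$, i.e. $\hat W$ acting on the last two slots of $\nabla Q$. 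The extra parts of the second and third terms come from the $W$-piece of $\Omega^{*}$ and, after the same contraction and using skew-symmetry of $\hat W$, equal $-\mathcal R_{km}\mathcal R_{ip}\mathcal R_{jq}\hat W_{pa}Q_{aq,m}$ and $-\mathcal R_{km}\mathcal R_{ip}\mathcal R_{jq}\hat W_{qb}Q_{pb,m}$, so the three extra parts sum to zero. The three objective parts then assemble into $\mathcal R_{km}\mathcal R_{ip}\mathcal R_{jq}\big(\partial_{m}\dot Q_{pq}-\Omega_{pl}Q_{lq,m}+Q_{pl,m}\Omega_{lq}\big)=\mathcal R_{km}\mathcal R_{ip}\mathcal R_{jq}M_{mpq}$, which is the claim. (Symmetry or tracelessness of $Q$ is not needed.)

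A shorter route I would at least record is the algebraic identity $M_{kij}=\partial_{k}Y_{ij}+Q_{lj}\,\partial_{k}\Omega_{il}-Q_{il}\,\partial_{k}\Omega_{lj}$, obtained by expanding $\partial_{k}$ of $Y_{ij}=\dot Q_{ij}-\Omega_{il}Q_{lj}+Q_{il}\Omega_{lj}$. Since $Y$ is objective (one checks $Y^{*}=\mathcal R Y\mathcal R^{T}$ from the rules above, the $W$-terms cancelling as in the usual Jaumann-derivative computation), its gradient $\nabla Y$ is a proper third-order tensor; and although $\Omega$ itself is not objective, its non-objective part $W$ is spatially uniform, so $\nabla\Omega$ \emph{is} a proper third-order tensor. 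Hence $M$ is a contraction of the objective tensors $\nabla Y$, $\nabla\Omega$ and $Q$, and is objective. I expect the only genuine difficulty to be bookkeeping --- keeping track of which index of $W$ or $\hat W$ lands on which slot once the non-objective corrections are differentiated and contracted, so that the cancellation is manifest --- while the conceptual point, that $\dot Q$ and $\Omega$ are separately non-objective but their particular combination in $M$ is not, is already transparent from the identity in this paragraph.
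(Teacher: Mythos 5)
Your proposal is correct, and the ``shorter route'' you record at the end is precisely the paper's proof: the paper starts from the objectivity of the Jaumann derivative $Y$, deduces that $\nabla Y$ transforms as a proper third-order tensor, and reduces the claim to the objectivity of the terms $\partial_k\Omega_{il}\,Q_{lj}$, which holds because the non-objective part $\dot{\mathcal R}\mathcal R^{T}$ of $\Omega^{*}$ is spatially uniform and is annihilated by the gradient. Your primary route is the same cancellation carried out by direct substitution of the transformation rules, so it is essentially the same argument in unrolled form.
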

\begin{proof}
A third-order tensor was objective if and only if it transformed according to the rules
\begin{equation*}
M_{kij}^*={\mathcal R}_{kr} {\mathcal R}_{ip} {\mathcal R}_{jq}M_{rpq} ,
\end{equation*}
here ${\mathcal R}={\mathcal R}(t)$ was a proper orthogonal time-dependent tensor. Since the Jaumann's derivative of tensor $Q$, defined as $Y=\dot Q - \Omega \cdot Q + Q \cdot \Omega$, was objective, we had $Y_{ij}^*={\mathcal R}_{ip} {\mathcal R}_{jq} Y_{pq}$. Taking spatial derivatives on each sides and recalling the Euclidean transformation $x^*(t)= \mathcal R(t) x(t) + c(t)$, one had
\begin{equation*}
%Y_{ij,k}^*= \frac{\partial }{\partial x_k^*} Y_{ij}^*
%= \frac{\partial }{\partial x_r}\frac{\partial x_r}{\partial x_k^*} ({\mathcal R}_{ip} {\mathcal R}_{jq} Y_{pq})
%={\mathcal R}_{ip} {\mathcal R}_{jq} {\mathcal R}_{kr} Y_{pq,r}.
\partial_{k}^* Y_{ij}^*= \frac{\partial }{\partial x_k^*} Y_{ij}^*
= \frac{\partial }{\partial x_r}\frac{\partial x_r}{\partial x_k^*} ({\mathcal R}_{ip} {\mathcal R}_{jq} Y_{pq})
={\mathcal R}_{kr} {\mathcal R}_{ip} {\mathcal R}_{jq} \partial_r Y_{pq} .
\end{equation*}
Substitutions of the definition $Y_{ij}^*= {\dot Q_{ij}}^* - \Omega_{ip}^* Q_{pj}^* + Q_{ip}^* \Omega^*_{pj} $ gave that
\begin{align*}
%&{\dot Q_{ij,k}}^* - \Omega_{ip}^* Q_{pj,k}^* + Q_{ip,k}^* \Omega^*_{pj}
%- {\mathcal R}_{ip} {\mathcal R}_{jq} {\mathcal R}_{kr} ({\dot Q_{pq,r}} - \Omega_{pt} Q_{tq,r} + Q_{pt,r} \Omega_{tq})\\
%=& \Omega_{ip,k}^* Q_{pj}^* - Q_{ip}^* \Omega^*_{pj,k}
%- {\mathcal R}_{ip} {\mathcal R}_{jq} {\mathcal R}_{kr} (\Omega_{pt,r} Q_{tq}  -  Q_{pt} \Omega_{tq,r}).
&\partial_{k}^* {\dot Q_{ij}}^* - \Omega_{ip}^* \partial_{k}^* Q_{pj}^* + \partial_{k}^* Q_{ip}^* \Omega^*_{pj}
- {\mathcal R}_{kr}  {\mathcal R}_{ip} {\mathcal R}_{jq}({\partial_{r} \dot Q_{pq}} - \Omega_{pt} \partial_{r} Q_{tq} + \partial_{r} Q_{pt} \Omega_{tq})\\
=& \partial_{k}^* \Omega_{ip}^* Q_{pj}^* - Q_{ip}^* \partial_{k}^* \Omega^*_{pj}
- {\mathcal R}_{kr} {\mathcal R}_{ip} {\mathcal R}_{jq} (\partial_r \Omega_{pt} Q_{tq}  -  Q_{pt} \partial_r \Omega_{tq}) .
\end{align*}
The left-hand side was just $M_{kij}^* -{\mathcal R}_{kr} {\mathcal R}_{ip} {\mathcal R}_{jq} M_{rpq} $, which meant that the objectivity of $M$ was equivalent with cancelation of the right-hand side. Actually, we could prove $\partial_{k}^* \Omega_{ip}^* Q_{pj}^* =  {\mathcal R}_{kr}  {\mathcal R}_{ip} {\mathcal R}_{jq} (\partial_r \Omega_{pt} Q_{tq})$, and the other relation followed immediately.

By taking the time derivative of Euclidean transformation $x_i^*= \mathcal R_{ij} x_j + c_i$, one had $v_i^* = \mathcal R_{ij} v_j + \dot{\mathcal  R}_{ij} x_j+ \dot c_i$. Further taking spatial gradients, we obtained $\partial_p^* v_i^* = \mathcal R_{pk} \dot{\mathcal  R}_{ik} + \mathcal R_{pk} \partial_k v_j \mathcal R_{ij} $.
It could be readily shown that the anti-symmetric part of velocity gradient was non-objective and transformed like,
\begin{equation*}
\Omega_{ip}^* = {\mathcal R}_{is} \Omega_{sa} {\mathcal R}_{pa} + { \mathcal {\dot R}}_{ik} \mathcal R_{pk},
\end{equation*}
via the relation ${ \mathcal {\dot R}}_{ik} \mathcal R_{pk}=-{ \mathcal {\dot R}}_{pk} \mathcal R_{ik}$.
Taking spatial gradients on both sides and multiplying by $ Q_{pj}^* = {\mathcal R}_{pt} Q_{tq} R_{jq}$ (Q was assumed to be objective) , we had
\begin{equation*}
\Omega_{ip, k}^*  Q_{pj}^*
= {\mathcal R}_{is} {\mathcal R}_{pa} {\mathcal R}_{kb}  {\mathcal R}_{pt} R_{jq} (\Omega_{sa,b} Q_{tq})
= \delta_{at}  {\mathcal R}_{is} {\mathcal R}_{jq} {\mathcal R}_{kb} (\Omega_{sa,b} Q_{tq})
={\mathcal R}_{ip} {\mathcal R}_{jq} {\mathcal R}_{kr} (\Omega_{pt,r} Q_{tq})
,
\end{equation*}
where we used the orthogonality ${\mathcal R}_{pa}{\mathcal R}_{pt} = \delta_{at}$ in the second step and relabeled indices in the last one.
%since ${ \mathcal {\dot R}}_{ik} \mathcal R_{pk}$ was a solely function of $t$
This completed the proof.
\end{proof}

\begin{lemma}\label{tensor_Q_symm}
In the tensorial theory, by assuming the entropy function $s(\nu, u, Q, \nabla Q, \sigma_V, g_V, \pi_V, q)$ was objective with respect to conserved variables,
%was unaffected by rigid rotations,
the tensor
$[2 Q_{ik} \frac{\partial s}{\partial Q_{jk}}
 +2 (\partial_k Q_{il}) \frac{\partial s}{\partial (\partial_k Q_{jl})}
 +( \partial_{i} Q_{kl})\frac{\partial s}{\partial (\partial_j Q_{kl})}]$
must be symmetric, and its contraction with the anti-symmetric part of velocity gradient $\Omega$ vanished.
% s(\nu, u, Q, \nabla Q, \sigma_V, g_V, \pi_V, q)
\end{lemma}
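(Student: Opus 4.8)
The plan is to follow the strategy of Lemma~\ref{tensor symmetric}, adapted to the richer index structure of $Q$ and $\nabla Q$. I would begin with a Euclidean transformation $x^*(t)=\mathcal R(t)x(t)+c(t)$, $\mathcal R^T\mathcal R=I$, under which the symmetric traceless tensor $Q$ transforms as a genuine second-order tensor and $\nabla Q$ as a third-order tensor, i.e.
\[
Q^*_{ij}=\mathcal R_{ip}\mathcal R_{jq}Q_{pq},\qquad
(\partial_k Q_{ij})^*=\mathcal R_{kr}\mathcal R_{ip}\mathcal R_{jq}\,\partial_r Q_{pq},
\]
the spatial index acquiring its own factor of $\mathcal R$ because $\partial x_r/\partial x_k^*=\mathcal R_{kr}$, exactly as in the proof of Lemma~\ref{tensor M anti}. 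Objectivity of $s$ in its conserved arguments then reads $s(\nu,u,Q^*,(\nabla Q)^*,\sigma_V,g_V,\pi_V,q)=s(\nu,u,Q,\nabla Q,\sigma_V,g_V,\pi_V,q)$ identically in $\mathcal R$.

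Next I would specialize to infinitesimal rotations $\mathcal R_{ij}=\delta_{ij}+\varepsilon_{ij}$ with $\varepsilon_{ji}=-\varepsilon_{ij}$ and $\varepsilon_{ij}\varepsilon_{kl}\approx 0$, giving the leading-order variations $\delta Q_{ij}=\varepsilon_{il}Q_{lj}-Q_{il}\varepsilon_{lj}$ and $\delta(\partial_k Q_{ij})=\varepsilon_{kr}\,\partial_r Q_{ij}+\varepsilon_{il}\,\partial_k Q_{lj}-(\partial_k Q_{il})\,\varepsilon_{lj}$. Substituting into the first-order stationarity condition $0\approx\frac{\partial s}{\partial Q_{ij}}\delta Q_{ij}+\frac{\partial s}{\partial(\partial_k Q_{ij})}\delta(\partial_k Q_{ij})$ and relabeling dummy indices so that every contribution becomes a double contraction against one and the same $\varepsilon$, one obtains $0\approx\mathcal T_{ij}\varepsilon_{ij}$ for a single second-order tensor $\mathcal T$. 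Since this holds for every skew-symmetric $\varepsilon$, the skew part of $\mathcal T$ vanishes, so $\mathcal T$ is symmetric; identifying $\mathcal T$ with the bracketed tensor in the statement (the overall transpose convention being immaterial to a symmetry claim) gives the first assertion.

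The one step that needs care is the bookkeeping that identifies $\mathcal T$ and explains the asymmetric placement of the factor $2$. The two pieces of $\delta Q$ coming from the two tensorial indices contribute $\varepsilon_{il}Q_{lj}\,\partial s/\partial Q_{ij}$ and $-Q_{il}\varepsilon_{lj}\,\partial s/\partial Q_{ij}$; after relabeling, and using that $Q$ and $\partial s/\partial Q$ are symmetric in their two indices, these are transposes of each other, and contracting a transpose against the skew-symmetric $\varepsilon$ reproduces the original contraction up to sign, so the two pieces reinforce and yield $2Q_{ik}\,\partial s/\partial Q_{jk}$; the same mechanism on the two tensorial indices of $\nabla Q$ yields $2(\partial_k Q_{il})\,\partial s/\partial(\partial_k Q_{jl})$, with the spatial index $k$ acting as a mere spectator. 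By contrast the spatial index of $\nabla Q$ occurs only once, so its variation $\varepsilon_{kr}\,\partial_r Q_{ij}\,\partial s/\partial(\partial_k Q_{ij})$ produces the single term $(\partial_i Q_{kl})\,\partial s/\partial(\partial_j Q_{kl})$ with no factor $2$. I expect keeping these relabelings straight --- in particular not conflating the spectator spatial index with the rotated indices, and recalling that the tracelessness ambiguity in $\partial s/\partial Q$ is harmless since $\delta_{ij}\varepsilon_{ij}=0$ --- to be the only genuine difficulty; the rest is mechanical.

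Finally, the second assertion is immediate: $\Omega$ is by definition the skew-symmetric part of $\nabla v$, hence a skew-symmetric second-order tensor, and its double contraction with the symmetric tensor $\mathcal T$ vanishes. This is precisely what makes the underlined term in the generalized Gibbs relation of the tensorial model drop out, in exact parallel with the vectorial computation that follows Lemma~\ref{tensor symmetric}.
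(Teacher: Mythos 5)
Your proposal is correct and follows essentially the same route as the paper's own proof: an infinitesimal rotation $\mathcal R_{ij}=\delta_{ij}+\varepsilon_{ij}$, first-order expansion of the objectivity identity for $s$, relabeling of dummy indices to collect a single contraction $\mathcal T_{ij}\varepsilon_{ij}$, and the arbitrariness of the skew-symmetric $\varepsilon$; your accounting of the factor of $2$ via the symmetry of $Q$ and $\partial s/\partial Q$ is exactly the step the paper compresses into ``we used the symmetry of $Q_{ij}=Q_{ji}$ to relabel indices.'' The concluding observation that a symmetric tensor contracted with the skew-symmetric $\Omega$ vanishes likewise matches the paper's use of the lemma.
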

\begin{proof}
As in Lemma \ref{tensor symmetric}, we chose arbitrary infinitesimal rotations with
\begin{equation*}
{\mathcal R}_{ij} = \delta_{ij } + \varepsilon_{ij},  \quad \varepsilon_{ji}= - \varepsilon_{ij}, \quad \varepsilon_{ij}\varepsilon_{kl} \approx 0,
\end{equation*}
and denoted variables under the motion $x^*(t)= {\mathcal R}(t) x(t)+c(t)$ by a starred symbol. Taking series expansion of $\varepsilon$ to the first order, we had
\begin{align*}
Q_{ij}^*
&= {\mathcal R}_{ip} {\mathcal R}_{jq} Q_{pq}
 =Q_{ij} + \varepsilon_{jq} Q_{iq} + \varepsilon_{ip}Q_{pj} + \mathcal {O}(\varepsilon^2),  \\
Q_{ij,k}^*
&={\mathcal R}_{ip} {\mathcal R}_{jq} {\mathcal R}_{kr} Q_{pq, r}
 =Q_{ij,k} + \varepsilon_{kr} Q_{ij, r} + \varepsilon_{jq}Q_{iq ,k} + \varepsilon_{ip} Q_{pj ,k}
  + \mathcal {O}(\varepsilon^2).
\end{align*}
Thus, we had approximately that
\begin{align*}
0&=s(Q_{ij}^*, Q_{ij,k}^*, \cdots) - s(Q_{ij}, Q_{ij,k}, \cdots)
 \approx \frac{\partial s}{\partial Q_{ij}} (Q_{ij}^* - Q_{ij}) + \frac{\partial s}{\partial Q_{ij, k}} (Q_{ij, k}^* - Q_{ij, k}) \\
 &=\frac{\partial s}{\partial Q_{ij}}
 (\varepsilon_{jq} Q_{iq} + \varepsilon_{ip}Q_{pj})
 + \frac{\partial s}{\partial Q_{ij, k}} ( \varepsilon_{kr} Q_{ij, r} + \varepsilon_{jq}Q_{iq ,k} + \varepsilon_{ip} Q_{pj ,k}  ) \\
 &=\big [
 2 Q_{ik} \frac{\partial s}{\partial Q_{jk}}
 +2 (\partial_k Q_{il}) \frac{\partial s}{\partial (\partial_k Q_{jl})}
 +( \partial_{i} Q_{kl})\frac{\partial s}{\partial (\partial_j Q_{kl})}
 \big ] \varepsilon_{ji}
,
\end{align*}
where we used the symmetry of $Q_{ij}=Q_{ji}$ in the last step to relabel indices. Therefore, $ [
 2 Q_{ik} \frac{\partial s}{\partial Q_{jk}}
 +2 (\partial_k Q_{il}) \frac{\partial s}{\partial (\partial_k Q_{jl})}
 +( \partial_{i} Q_{kl})\frac{\partial s}{\partial (\partial_j Q_{kl})}]$ must be symmetric due to arbitrariness of the anti-symmetric tensors $\varepsilon$.
 This completed the proof.
\end{proof}

\begin{lemma}\label{proof of the epr5 geq 0}
(A Proof of the sufficient condition \eqref{epr5 geq 0} for $epr5 \geq 0$)
\end{lemma}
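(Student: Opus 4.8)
The plan is to work directly with the residual term, since Lemma \ref{tensor_Q_symm} has already disposed of the Ericksen contributions and the relaxation parts $epr1,\dots,epr4$ are manifestly non-negative. Thus it suffices to show $\theta\cdot epr5=\sigma_{QS}^{T}:A-g_{QS}:Y\ge 0$ under \eqref{epr5 geq 0}. First I would substitute the constitutive forms of $\sigma_{QS},g_{QS}$ from \eqref{nonisothermal tensorial model}, using the auxiliary facts that $A$ is symmetric, that $Y$ is symmetric and traceless (being the Jaumann derivative of the symmetric traceless $Q$), and the identity $Q:(A\cdot A)=(Q\cdot A):A$. A short calculation — in which the two terms $-\mu_1 Q\cdot Y$ and $\mu_1(Q\cdot Y)^{T}$ of $\sigma_{QS}$ cancel each other against $A$ — gives
$$\theta\cdot epr5=\bar\alpha_7[\mathrm{tr}(A)]^2+\bar\alpha_8\,\mathrm{tr}(A)\,Q:(Q\cdot A)+\bar\alpha_1(Q:A)^2+\bar\alpha_4|A|^2+(\bar\alpha_5+\bar\alpha_6)(Q\cdot A):A+\mu_2\,Y:A+\mu_1|Y|^2.$$

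Next I would peel off the pieces controlled by the first conditions: $\bar\alpha_8=0$ removes the indefinite cubic-looking term, and $\bar\alpha_7\ge 0$ makes $\bar\alpha_7[\mathrm{tr}(A)]^2$ a non-negative square to be discarded. Then, using $\mu_1\ge 0$ (with the degenerate case $\mu_1=0$, which forces $\mu_2=0$ via $4\bar\alpha_4\mu_1-\mu_2^2\ge 0$, handled separately), I complete the square in $Y$,
$$\mu_1|Y|^2+\mu_2\,Y:A=\mu_1\Bigl|Y+\tfrac{\mu_2}{2\mu_1}A\Bigr|^{2}-\tfrac{\mu_2^{2}}{4\mu_1}|A|^{2},$$
and discard the non-negative square. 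This reduces everything to showing $G(A):=\bar\alpha_1(Q:A)^2+\lambda|A|^2+(\bar\alpha_5+\bar\alpha_6)(Q\cdot A):A\ge 0$, where $\lambda:=\bar\alpha_4-\tfrac{\mu_2^{2}}{4\mu_1}\ge 0$ is exactly the content of $4\bar\alpha_4\mu_1-\mu_2^2\ge 0$.

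To analyze $G(A)$ I would diagonalize the symmetric traceless $Q$ in an orthonormal eigenbasis, $Q=\mathrm{diag}(q_1,q_2,q_3)$ with $q_1+q_2+q_3=0$, and write $G$ in the components of the symmetric matrix $A$. The off-diagonal components $A_{ij}$ $(i\ne j)$ and the diagonal components $A_{ii}$ decouple, since $(Q:A)$ involves only the $A_{ii}$ while $|A|^2$ and $(Q\cdot A):A=\sum_i q_i\sum_j A_{ij}^2$ split accordingly. The off-diagonal contribution is a sum of one-variable forms $[2\lambda+(\bar\alpha_5+\bar\alpha_6)(q_i+q_j)]A_{ij}^2$, non-negative by $\lambda\ge 0$ together with the admissible range of the eigenvalues (treated as parameters exactly as $d_1$ was in Theorem \ref{2nd law}). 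The diagonal contribution is the quadratic form in $(A_{11},A_{22},A_{33})$ with matrix $\lambda I+(\bar\alpha_5+\bar\alpha_6)\,\mathrm{diag}(q_1,q_2,q_3)+\bar\alpha_1\,\mathbf{q}\mathbf{q}^{T}$; its positive semidefiniteness I would establish by the principal-minor criterion, noting that the diagonal entries $\lambda+(\bar\alpha_5+\bar\alpha_6)q_i+\bar\alpha_1 q_i^2$ are non-negative precisely because $4\bar\alpha_1\lambda-(\bar\alpha_5+\bar\alpha_6)^2\ge 0$, which is the last inequality $4\bar\alpha_1\bar\alpha_4\mu_1-\bar\alpha_1\mu_2^2-\mu_1(\bar\alpha_5+\bar\alpha_6)^2\ge 0$ after clearing $\mu_1$, while the $2\times 2$ and $3\times 3$ minors are handled with $\bar\alpha_1\ge 0$, the relation $4\bar\alpha_4\bar\alpha_1-(\bar\alpha_5+\bar\alpha_6)^2\ge 0$, and the constraint $\sum_i q_i=0$.

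The main obstacle will be this final diagonal $3\times 3$ form: the rank-one perturbation $\bar\alpha_1\mathbf{q}\mathbf{q}^{T}$ has to compensate any negativity of $(\bar\alpha_5+\bar\alpha_6)\,\mathrm{diag}(q_i)$, and the interplay between the traceless constraint $\sum_i q_i=0$ and the eigenvalue range is delicate. I would organize it as a Schur-complement reduction — eliminating the $\mathbf{q}$-direction first, after which the residual $2\times 2$ block is diagonal with non-negative entries; a coarser but more transparent route is to bound $(Q\cdot A):A\ge -\|Q\|\,|A|^2$ and absorb it into $\lambda|A|^2$. In either route no ingredient beyond \eqref{nonisothermal tensorial model}, Lemma \ref{tensor_Q_symm}, and elementary quadratic-form algebra is needed.
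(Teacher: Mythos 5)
Your route diverges from the paper's at the decisive step, and as written it does not establish the stated claim. The paper never substitutes $P=Q\cdot A$ back into the form and never diagonalizes $Q$: it regards
\begin{equation*}
epr5'=\bar\alpha_4|A|^2+\mu_2\,Y\!:\!A+(\bar\alpha_5+\bar\alpha_6)\,P\!:\!A+\mu_1|Y|^2+\bar\alpha_1|P|^2
\end{equation*}
as a quadratic form in the \emph{three} tensors $A,Y,P$ treated as independent arguments, completes squares successively in each of them (equivalently, checks that the $3\times3$ scalar coefficient matrix with diagonal $\bar\alpha_4,\mu_1,\bar\alpha_1$ and off-diagonal entries $\mu_2/2$, $(\bar\alpha_5+\bar\alpha_6)/2$, $0$ is positive semidefinite), and reads off exactly the $Q$-independent inequalities \eqref{epr5 geq 0}. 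Non-negativity for every triple $(A,Y,P)$ then trivially implies it for the dependent triple with $P=Q\cdot A$. This coarseness is the whole point: it is what makes the sufficient condition free of $Q$.

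Your finer analysis in the eigenbasis of $Q$ is where the gap opens. After discarding the $Y$-square you are left with $G(A)=\bar\alpha_1(Q\!:\!A)^2+\lambda|A|^2+(\bar\alpha_5+\bar\alpha_6)(Q\cdot A)\!:\!A$, and, as you yourself observe, $(Q\!:\!A)$ sees only the diagonal components of $A$ in that basis. The off-diagonal block therefore reduces to $[\,2\lambda+(\bar\alpha_5+\bar\alpha_6)(q_i+q_j)\,]A_{ij}^2$ with no $\bar\alpha_1$ contribution available to compensate. Non-negativity of this coefficient is \emph{not} a consequence of \eqref{epr5 geq 0}: the hypotheses only give $4\bar\alpha_1\lambda\ge(\bar\alpha_5+\bar\alpha_6)^2$, which controls $\bar\alpha_5+\bar\alpha_6$ by the geometric mean of $\lambda$ and $\bar\alpha_1$, whereas your off-diagonal coefficient needs $|\bar\alpha_5+\bar\alpha_6|\,|q_i+q_j|\le 2\lambda$, a bound by $\lambda$ alone scaled by the eigenvalues of $Q$ (take $\lambda$ small and $\bar\alpha_1$ large to see the failure). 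Your two proposed repairs --- restricting the ``admissible range of the eigenvalues'' or bounding $(Q\cdot A)\!:\!A\ge-\|Q\|\,|A|^2$ --- both smuggle in $Q$-dependent hypotheses that are not part of \eqref{epr5 geq 0}, so at best you would prove a different sufficient condition. (The discrepancy you are running into also reflects a notational slippage in the paper itself between $\bar\alpha_1(Q\!:\!A)^2$ in $\Sigma^f$ and $\bar\alpha_1|Q\cdot A|^2$ in the lemma; the paper's argument is carried out for the latter, treated as an independent variable.) To prove the lemma as stated, drop the spectral decomposition and run the paper's three-variable completion of squares.
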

\begin{proof}
Assuming $\bar \alpha_8=0, \bar \alpha_7 \geq 0$ and denoting the residual terms
%as $epr5'= epr5- \bar \alpha_7 [tr(A)]^2 $, we obtained
\begin{align} \label{epr5'}
epr5'=
&  \bar \alpha_4 |A|^2 + \mu_2 Y:A  + (\bar \alpha_5 +\bar \alpha_6)P:A + \mu_1 |Y|^2 +\bar \alpha_1 |P|^2,
\end{align}
as a quadratic form of tensors $A$, $Y$ and $P \equiv Q \cdot A$.
Without loss of generality, we firstly chose $A$ as the principle element.
By letting $\bar \alpha_4 \neq 0$, we had
\begin{align*}
epr5'=
&  \bar \alpha_4 \big |A + \frac{\mu_2}{2 \bar \alpha_4}Y + \frac{\bar \alpha_5 +\bar \alpha_6}{2 \bar \alpha_4}P    \big |^2 \\
& \underline{
+ ( \mu_1- \frac{\mu_2^2}{4 \bar \alpha_4} )|Y|^2
- \frac{\mu_2 (\bar \alpha_5 +\bar \alpha_6) }{2 \bar \alpha_4} Y:P
+ [ \bar \alpha_1 - \frac{(\bar \alpha_5 +\bar \alpha_6)^2 }{4 \bar \alpha_4}] |P|^2
}
\equiv I_1 +I_2 ,
\end{align*}
where the symmetry property of $A$ (or $Y$) was used to guarantee the commutativity of contractions, $P^T:A=P^T:A^T= P:A$ (or $Y: P^T= Y:P$).
Supposing $4 \mu_1 \bar \alpha_4  \neq \mu_2^2$, the second part of $epr5'$ became
\begin{align*}
I_2=
&  ( \mu_1- \frac{\mu_2^2}{4 \bar \alpha_4} )|Y|^2
- \frac{\mu_2 (\bar \alpha_5 +\bar \alpha_6) }{2 \bar \alpha_4} Y:P
+ [ \bar \alpha_1 - \frac{(\bar \alpha_5 +\bar \alpha_6)^2 }{4 \bar \alpha_4}] |P|^2 \\
&=   \frac{4 \mu_1 \bar \alpha_4 -  \mu_2^2}{4 \bar \alpha_4}
\big |Y    - \frac{\mu_2 (\bar \alpha_5 +\bar \alpha_6) }{4 \mu_1 \bar \alpha_4 -  \mu_2^2} P    \big |^2
+ [ \frac{4  \bar \alpha_1 \bar \alpha_4 - (\bar \alpha_5 +\bar \alpha_6)^2 }{4 \bar \alpha_4}
- \frac{{\mu_2^2 (\bar \alpha_5 +\bar \alpha_6)^2 }}{4 \bar \alpha_4({4 \mu_1 \bar \alpha_4 -  \mu_2^2})}
] |P|^2 \\
&\equiv I_{21} + I_{22}
.
\end{align*}
Collecting the quadratic terms together, the constraint $epr5' \geq 0$ yielded that
%a sufficient condition for $epr5 \geq 0$ as
\begin{align*}
\bar \alpha_4 \geq 0, \quad
\frac{4 \mu_1 \bar \alpha_4 -  \mu_2^2}{4 \bar \alpha_4} \geq 0, \quad
[ \frac{4  \bar \alpha_1 \bar \alpha_4 - (\bar \alpha_5 +\bar \alpha_6)^2 }{4 \bar \alpha_4}
- \frac{{\mu_2^2 (\bar \alpha_5 +\bar \alpha_6)^2 }}{4 \bar \alpha_4({4 \mu_1 \bar \alpha_4 -  \mu_2^2})}
] \geq 0,
\end{align*}
which deduced that
\begin{equation*}
\begin{split}
&
%\bar \alpha_8=0, \quad \bar \alpha_7 \geq 0,
\quad \bar \alpha_4 \geq 0,
\quad 4 \bar \alpha_4 \mu_1 - {\mu_2^2} \geq 0,
\quad 4 \bar \alpha_1 \bar \alpha_4 \mu_1 - \bar \alpha_1 \mu_2^2 - \mu_1({\bar \alpha_5 +\bar \alpha_6})^2 \geq 0.
\end{split}
\end{equation*}
Repeating above procedures, by choosing principal elements as $Y$ and $P$ separately, we obtained a complete sufficient condition for $epr5 \geq 0$ as
\begin{equation} \label{epr5 geq 00}
\begin{split}
&
\bar \alpha_8=0, \quad \bar \alpha_7 \geq 0,
\quad \bar \alpha_4 \geq 0,
\quad \mu_1 \geq 0,
\quad \bar \alpha_1 \geq 0, \\
&
\quad 4 \bar \alpha_4 \mu_1 - {\mu_2^2} \geq 0,
\quad 4 \bar \alpha_4 \bar \alpha_1   - ({\bar \alpha_5 +\bar \alpha_6})^2 \geq 0, \\
&\quad 4 \bar \alpha_1 \bar \alpha_4 \mu_1 - \bar \alpha_1 \mu_2^2 - \mu_1({\bar \alpha_5 +\bar \alpha_6})^2 \geq 0.
\end{split}
\end{equation}
This completed the proof.
\end{proof}

An alternative proof could be given by writing down the symmetric matrix associated with the quadratic form \eqref{epr5'} of $(A, Y, P)$ as
\begin{align*}
\begin{pmatrix}
\bar \alpha_4 & \frac{\mu_2}{2} & \frac{\bar \alpha_5 + \bar \alpha_6}{2}
\\  & \mu_1 & 0  \\
* & & \bar \alpha_1
\end{pmatrix}.
\end{align*}

\section{Appendix B: A Brief Discussion on the Angular Momentum Conservation of Liquid Crystals}
The conservations laws of mass, momentum, angular momentum and total energy played an important role in our mathematical modeling. However, unlike the mass, momentum and energy conservation laws, the conservation of angular momentum was not fully appreciated.

In this section, we presented a brief discussion on the angular momentum equations, and showed that the evolution equation of $w$ in \eqref{angmom} was a particular solution of the approximate equations for angular momentum conservation. For nematic liquid crystals, the classical conservation law of angular momentum took the following local Eulerian form
\begin{align}
&\frac{\partial}{\partial t}(\rho r \times v + \rho_1 d \times w)  + \nabla \cdot [ v \otimes (\rho r \times v + \rho_1 d \times w)]
=r \times \xi + \nabla \cdot (r \times \sigma + d \times \pi),\label{angu_append}
\end{align}
where $\sigma$ and $\pi$ were flow stress tensor and director surface torque, respectively. By applying the continuity equation, it became
\begin{align*}
&\epsilon_{ijk}[\rho \frac{d}{dt}( { {r_j v_k}}) + \rho_1 \frac{d}{dt}({ {d_j w_k}})] - \epsilon_{ijk} r_j \xi_k - \epsilon_{ijk} \partial_l (r_j \sigma_{lk} + d_j \pi_{lk}) \\
&=\epsilon_{ijk} (\rho  r_j \dot {v}_k + \rho_1 d_j \dot {w}_k) - \epsilon_{ijk} (r_j \xi_k + \sigma_{jk} + r_j \sigma_{lk,l} + d_{j,l} \pi_{lk} + d_{j} \pi_{lk,l})\\
&=\epsilon_{ijk} r_j(\rho \dot {v}_k -\xi_k - \sigma_{lk,l}) + \epsilon_{ijk} ( \rho_1 d_j \dot {w}_k - d_{j} \pi_{lk,l} - d_{j,l} \pi_{lk} -  \sigma_{jk})\\
&= \epsilon_{ijk} [  d_j  ( \rho_1 \dot {w}_k - \pi_{lk,l})  - (d_{j,l} \pi_{lk} + \sigma_{jk})] = 0
,
\end{align*}
in the Lagrangian form. %the dot indicated material derivative,
Here we used $\epsilon_{ijk}  v_j {v_k}=\epsilon_{ikj}  v_k {v_j} = 0$ and $\epsilon_{ijk}  w_j {w_k}=\epsilon_{ikj}  w_k {w_j} = 0$ in the second step, as well as the momentum conservation in the third step. In the case when $\epsilon_{ijk} (d_{j,l} \pi_{lk} + \sigma_{jk})$ could be approximated by $\epsilon_{ijk} d_{j} g_k$ for a vector function $g$, it deduced that $\epsilon_{ijk}  d_j  ( \rho_1 \dot {w}_k - \pi_{lk,l} - g_{k}) = 0$. A particular solution of the approximate equation was given by
\begin{align*}
\rho_1 \dot {w}_k - \pi_{lk,l} - g_{k}=0,
\end{align*}
which was also recognized as the Oseen's equation in literature\cite{Chandrasekhar1992Liquid, Oseen1933The}. In what follows, we showed that this approach could also be extended to the tensorial theory and leaded to a generalized hydrodynamic equation for liquid crystals.
%\begin{align}
%&\frac{\partial}{\partial t}\rho + \nabla \cdot (\rho v)=0,\label{mass}\\
%&\frac{\partial}{\partial t}(\rho v)+ \nabla\cdot(\rho v \otimes v)=\xi + \nabla \cdot \sigma,\label{linmom}\\
%&\frac{\partial}{\partial t}(\rho_1 w)+ \nabla\cdot(\rho_1 v \otimes w)= g + \nabla \cdot \pi, \label{angmom}\\
%&\frac{\partial}{\partial t}(\rho e) + \nabla \cdot(v \rho e)=\xi \cdot v + \nabla \cdot (\sigma \cdot v + \pi \cdot w - q)\label{energy}.
%\end{align}

As to the tensorial theory for liquid crystals, the generalized conservation law of angular momentum became
\begin{align}\label{angu_append_Q}
&\frac{\partial}{\partial t}(\rho r \times v + \rho_1 Q \dot \times \dot Q)  + \nabla \cdot [ v \otimes (\rho r \times v + \rho_1 Q \dot\times \dot Q)]
=r \times \xi + \nabla \cdot (r \times \sigma + Q \dot \times \pi),
\end{align}
where $Q \dot\times \dot Q= \epsilon_{ijk} Q_{jm} \dot Q_{km}$, $Q \dot \times \pi=\epsilon_{ijk} Q_{jm} \pi_{lkm}$, the third-order tensor $\pi_{lkm}$ was the surface torque of liquid crystals. By substituting the mass and momentum conservations into \eqref{angu_append_Q}, we obtained
\begin{align*}
&\epsilon_{ijk}[\rho \frac{d}{dt} ( { {r_j v_k}}) + \rho_1 \frac{d}{dt} ({ {Q_{jm} \dot Q_{km}}})] - \epsilon_{ijk} r_j \xi_k -\epsilon_{ijk} \partial_l (r_j \sigma_{lk} + Q_{jm} \pi_{lkm} ) \\
&=\epsilon_{ijk} (\rho  r_j \dot {v}_k + \rho_1 Q_{jm} \ddot {Q}_{km}) - \epsilon_{ijk} (r_j \xi_k + \sigma_{jk} + r_j \sigma_{lk,l} + Q_{jm}  \pi_{lkm, l} + Q_{jm, l} \pi_{lkm})\\
&=\epsilon_{ijk} r_j(\rho \dot {v}_k - \xi_k - \sigma_{lk,l}) + \epsilon_{ijk} ( \rho_1 Q_{jm} \ddot {Q}_{km} - Q_{jm} \pi_{lkm, l} -  Q_{jm,l} \pi_{lkm}  -  \sigma_{jk})\\
&=  \epsilon_{ijk} Q_{jm}( \rho_1 \ddot {Q}_{km} -  \pi_{lkm, l})  - \epsilon_{ijk} ( Q_{jm,l} \pi_{lkm} + \sigma_{jk})
 = 0
,
\end{align*}
where the formula $\rho \dot {v}_k = \xi_k + \sigma_{lk,l}$ was used in the third step. By making a similar approximation that $\epsilon_{ijk} ( Q_{jm,l} \pi_{lkm} + \sigma_{jk}) = \epsilon_{ijk} Q_{jm} g_{km}$, we had $ \epsilon_{ijk} Q_{jm}( \rho_1 \ddot {Q}_{km}  - g_{km} -  \pi_{lkm, l} ) = 0$, where the second-order tensor $(g + \nabla \cdot \pi)$ was also known as the molecular field of liquid crystals. A particular class of solutions was
\begin{equation} \label{Q-tensor angular comp}
\rho_1 \ddot {Q}_{km} = g_{km} + \partial_l (\pi_{l km}) ,
\end{equation}
which was employed in Eq. \eqref{Qangmom} as the starting point of our derivation. Clearly, Eq. \eqref{Q-tensor angular comp} was different from the one used by Qian and Sheng \cite{Qian1998Generalized}, in which only the body torque $g$ (or molecular field $h$ in Eq. (3) in ref. \cite{Qian1998Generalized}) was included, while the surface torque $\pi$ was neglected without justification.

\bibliographystyle{ieeetr}
\bibliography{ref}
%----------------------------------------------------------------------------------------
\end{document}